\documentclass[fleqn,11pt]{wlscirep}
\usepackage[utf8]{inputenc}
\usepackage[T1]{fontenc}
\usepackage{hyperref}
\usepackage{authblk}
\usepackage{xcolor}
\usepackage{siunitx}
\usepackage{ragged2e}
\usepackage{algorithm}
\usepackage{algpseudocode}
\usepackage{amsthm}
\usepackage{tikz}
\usetikzlibrary{arrows.meta,calc,positioning,decorations.pathreplacing}
\usepackage{mathtools}

\newcommand{\name}{ATLAS}

\definecolor{R0}{rgb}{0,0,255}

\title{ATLAS: A Foundation Neural Sampler for Amorphous Materials}

\author[1,2,3,*,$\#$]{Mouyang Cheng}
\author[4,*]{Denis Blessing}
\author[1,5,$\#$]{Botao Yu}
\author[4]{Gerhard Neumann}
\author[2,6]{Mingda Li}
\author[1,$\dagger$]{Carles Domingo-Enrich}
\author[1,$\dagger$]{Yuanqi Du}

\affil[1]{Microsoft Research New England, Cambridge, MA 02142, USA}
\affil[2]{Center for Computational Science and Engineering, MIT, Cambridge, MA 02139, USA}
\affil[3]{Department of Materials Science and Engineering, MIT, Cambridge, MA 02139, USA}
\affil[4]{Karlsruhe Institute of Technology, 76131 Karlsruhe, Germany}
\affil[5]{Department of Computer Science and Engineering, OSU, Columbus, OH 43210, USA}
\affil[6]{Department of Nuclear Science and Engineering, MIT, Cambridge, MA 02139, USA}

\affil[*]{These authors contributed equally.}
\affil[$\#$]{Work done during internship at Microsoft Research.}
\affil[$\dagger$]{Correspondence. Email: carlesd@microsoft.com, yuanqidu@microsoft.com}

\begin{abstract}
Amorphous materials exhibit exceptional mechanical and functional properties, yet their rugged energy landscapes are notoriously difficult to sample. Below the glass-transition temperature, conventional molecular dynamics and Monte Carlo become inefficient because equilibration relies on rare barrier-crossing events, while data-driven generative models are constrained by scarce and biased reference ensembles. 
Here, we introduce ATLAS, an efficient sampler that learns a diffusion process to generate Boltzmann-distributed amorphous structures directly from a target energy function. Parameterized by an equivariant graph neural network, ATLAS generalizes across system size, temperature, and composition. By exploiting the time reversal of the diffusion process, it enables efficient estimation of thermodynamic quantities and steering toward target observables. 
In two-dimensional Kob-Andersen systems, ATLAS reproduces parallel tempering Markov chain Monte Carlo structural distributions, free energies and entropies, achieving below 0.2\% free energy error in the low-temperature glass regime with over 500-fold fewer energy evaluations. 
In Cu-Zr and Cr-Co-Ni metallic glasses, ATLAS recovers experimentally observed short-range-order trends and steers structures toward prescribed order parameters and optimized bulk moduli. 
Moreover, composition-amortized pretraining outperforms composition-specific training from scratch, reduces inverse-design costs by several hundred-fold, and enables sampling with expensive universal machine learning interatomic potentials. 
Coupled to a large language model agent, ATLAS searches an eight-element space for high-entropy metallic glasses balancing stiffness and ductility, identifying a converged Pareto frontier within 480 oracle evaluations. Together, these results establish ATLAS as a foundation model for sampling, steering and designing amorphous materials.
\end{abstract}

\begin{document}

\flushbottom
\maketitle
\thispagestyle{empty}

\section*{Introduction}
Amorphous solids have emerged as a frontier in materials science and condensed matter physics \cite{mauro2014two,liu2025amorphous}. From bulk oxide glasses \cite{zachariasen1932atomic} and metallic glasses \cite{cheng2011atomic} to two-dimensional amorphous materials \cite{shi2025emergence}, these non-crystalline systems can transform structural disorder into tunable mechanical, transport, optical and chemical properties, sometimes beyond those of their crystalline counterparts \cite{smith2013photochemical,hong2020ultralow,tian2023disorder,cai2026ceramic}.
These physical properties are controlled by a thermodynamic ensemble of metastable atomic configurations with short- and medium-range order, defects and rare structural rearrangements. 
Predicting these ensembles is therefore central to understanding structure-property relationships in disordered systems, and to enabling rational design and discovery of amorphous materials.

However, efficiently simulating amorphous materials remains a longstanding challenge because their potential energy surfaces (PESs) contain a vast number of structurally distinct metastable basins. At low temperatures $T$, transitions between these basins require rare collective rearrangements over barriers many times larger than $k_{\mathrm B}T$, making conventional simulations exceedingly slow \cite{ninarello2017models,berthier2023modern,leoni2026computational}.
Melt-quench protocols in molecular dynamics (MD) can generate realistic amorphous structures \cite{vollmayr1996cooling}, but their cooling rates are many orders of magnitude faster than those accessible in experiments, making the resulting configurations strongly protocol-dependent and not guaranteed to represent the target ensemble with correct Boltzmann weights.
Replica-exchange and swap Monte Carlo methods can improve equilibration in modeling glass-forming systems \cite{yamamoto2000replica,gazzillo1989equation,grigera2001fast}, but convergence, transferability and scalability remain challenging in low-temperature amorphous regimes.
Recent deep generative models are naturally suited to alleviating the challenge of PES sampling because they can efficiently represent complex, high-dimensional probability distributions over atomic configurations \cite{cheng2026artificial}.
For example, in the same spirit as recent advances in crystal structure prediction and \textit{de novo} generation \cite{xie2021crystal,jiao2023crystal,zeni2025generative}, diffusion- and flow-matching-based methods have been applied to generate diverse periodic amorphous structures \cite{yang2025generative,grenioux2025riemannian}, with extensions to conditional generation guided by target properties or structural constraints \cite{li2025conditional}. 
However, these approaches, like their crystalline counterparts, rely on structural training datasets. This requirement is especially limiting for amorphous materials, where no off-the-shelf database provides broad, equilibrated ensembles of amorphous configurations. 
Training data must therefore be generated by conventional simulations, which already introduces deviations from the target thermodynamic distribution. As a result, a generative model trained on these data may therefore reproduce these biases rather than recover the desired ensemble.

The limitations of training on precomputed data have motivated recent advances in establishing probabilistic models (named as neural samplers) to directly sample high-dimensional Boltzmann distributions, whose unnormalized densities are specified by known energy functions, without relying on available structural data. 
Early approaches used normalizing flow-based models to learn an invertible map from a simple prior to the target Boltzmann distribution over continuous particle coordinates \cite{noe2019boltzmann,gabrie2022adaptive}; more recently, diffusion-based neural samplers have emerged as a more expressive and flexible alternative \cite{zhang2021path,vargasdenoising2023,richter2024improved,vargas2024transport}. 
Despite their theoretical foundation, neural samplers face significant practical challenges, ranging from limited sample efficiency and mode collapse to inaccurate mode weights \cite{he2025no}. These challenges have motivated the subsequent development of sample-efficient, mode-seeking neural samplers based on scalable fixed-point matching methods \cite{havens2025adjoint,liu2026adjoint,blessing2026bridge}. However, direct sampling of amorphous matter, or even realistic molecular systems more broadly, with neural samplers, remains challenging.

In this work, we introduce ATLAS, a neural sampler for amorphous matter that samples Boltzmann-weighted structural ensembles and can be flexibly applied to a broad range of downstream tasks.
ATLAS learns to transform a uniform prior under periodic boundary conditions into Boltzmann-distributed atomic configurations through a diffusion process.
An equivariant graph neural network (GNN) is trained on bootstrapped configurations generated by ATLAS itself to recover both the forward and backward stochastic dynamics, using the target potential energy $U(\mathbf{x})$ as direct supervision rather than reference structures.
Once trained, ATLAS generates independent amorphous configurations together with free energy estimates and Boltzmann-weighted observables, and can be generalized across chemical composition, temperature and system size. 
Benchmarks on two-dimensional Kob-Andersen (KA) glass formers against parallel tempering Markov chain Monte Carlo (PT-MCMC) show that ATLAS reproduces structural and thermodynamic observables, achieving below 0.2\% free energy error in the low-temperature glass regime ($T=0.2$) with more than 500-fold fewer potential energy evaluations.
Moreover, ATLAS can be efficiently steered toward target expectations of experimental observables, either through fine-tuning or inference-time control. We demonstrate this capability in binary Cu-Zr and ternary Cr-Co-Ni glasses, where ATLAS recovers experimentally observed short-range order (SRO) and enables steering toward target Warren-Cowley SRO parameters and optimized mechanical bulk modulus. 
Finally, we use composition-conditioned ATLAS as an amortized sampling engine for active-learning-based inverse design of high-entropy metallic glasses, aiming to jointly optimize bulk modulus, shear modulus and the Pugh ratio as measures of stiffness and ductility. 
A single pretrained ATLAS model outperforms composition-specific models trained from scratch and eliminates repeated retraining throughout the search, reducing the cumulative training cost by hundreds of fold. Coupled to a large language model (LLM)-guided evolutionary agent and the universal MACE-MPA-0 machine learning interatomic potential (MLIP), ATLAS selects up to five elements from an eight-element pool and identifies a converged Pareto frontier within 480 oracle evaluations. 
These results position ATLAS as a foundation model for amorphous materials, unifying Boltzmann sampling, thermodynamic estimation, controllable generation and composition-level inverse design within a single computational framework.

\section*{Results}

\begin{figure}[!h]
  \centering
  \includegraphics[width=\textwidth]{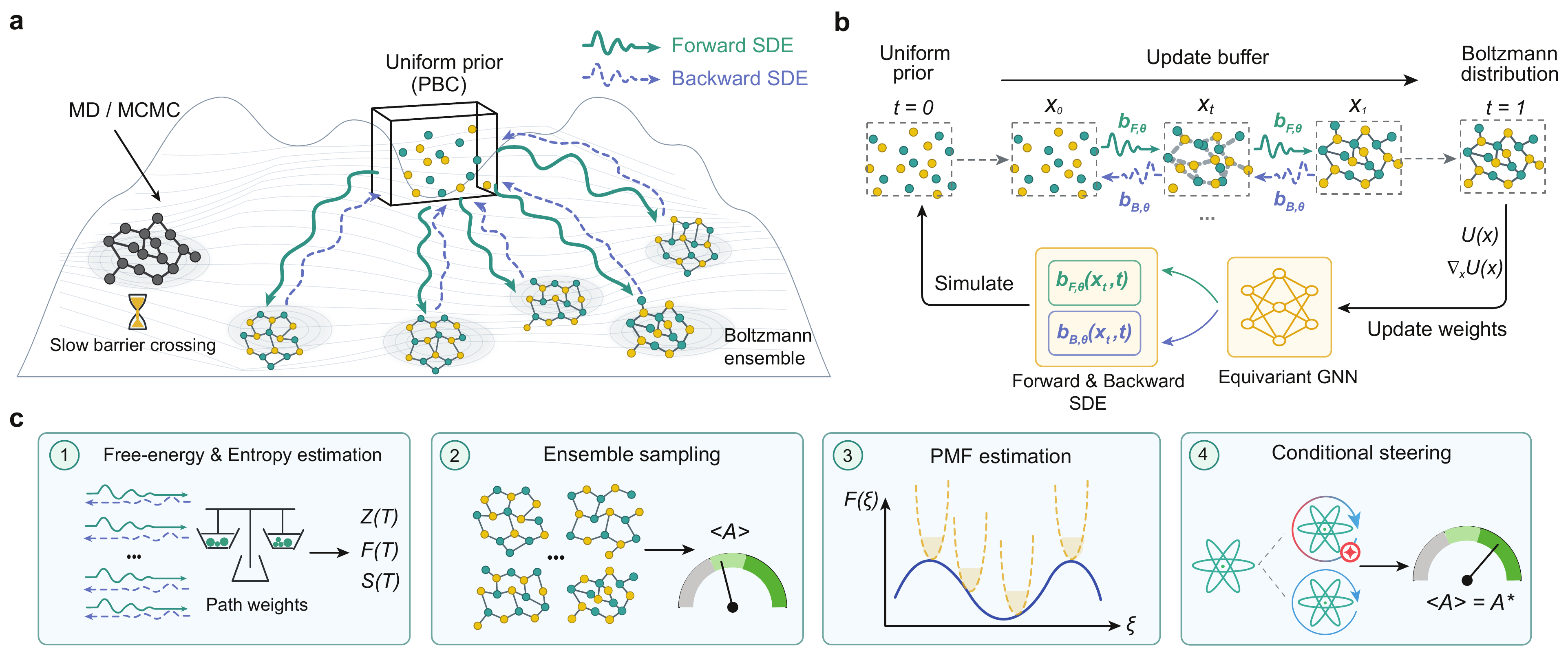}
  \caption{\textbf{Foundation neural sampling of amorphous materials.} \textbf{a.} Molecular dynamics (MD) and Markov chain Monte Carlo (MCMC) can suffer from slow barrier-crossing when traversing rugged energy landscapes. ATLAS instead learns stochastic transport, through forward and backward stochastic differential equations (SDEs), between a uniform prior under periodic boundary conditions and the Boltzmann ensemble, enabling direct generation of independent equilibrium structures without relying on precomputed data.
  \textbf{b.} Training framework of ATLAS. An equivariant graph neural network parameterizes the forward and backward diffusion processes, $b_{F,\theta}(\mathbf{x}_t,t)$ and $b_{B,\theta}(\mathbf{x}_t,t)$, along an interpolation from the prior at $t=0$ to the target distribution at $t=1$. Model weights are updated using the potential energy $U(\mathbf{x})$ and its gradient $\nabla_\mathbf{x} U(\mathbf{x})$, while newly generated configurations replenish the training buffer.
  \textbf{c.} Applications supported by ATLAS, enabled by forward and backward path weights, including free energy and entropy estimation, equilibrium sampling and ensemble-property prediction, potential-of-mean-force (PMF) estimation along collective variables $\xi$; and conditional steering towards a target observable.}
  \label{fig1}
\end{figure}

\subsection*{Overview of ATLAS} 
Here we present an overview of ATLAS in Fig.\,\ref{fig1}.
Our aim is to draw independent configurations from the canonical (NVT) ensemble of an $N$-atom system in a $d$-dimensional supercell with periodic boundary conditions (Fig.\,\ref{fig1}a),
\begin{equation}
    p_{\mathrm{target}}(\mathbf{x}) = \frac{1}{Z}\,e^{-\beta U(\mathbf{x})}, \qquad \beta = \frac{1}{k_{\mathrm B}T},
\end{equation}
where $\mathbf{x}\in(\mathbb{R}^d/\mathbb{Z}^d)^N$ denotes atomic coordinates on the flat torus and $U(\mathbf{x})$ is a differentiable interatomic potential evaluated under full periodic boundary conditions. ATLAS requires two quantities from the target distribution: the unnormalized log density, $\log p_{\mathrm{target}}(\mathbf{x})=-\beta U(\mathbf{x})+\mathrm{const}$, and its score, $\nabla_{\mathbf{x}}\log p_{\mathrm{target}}(\mathbf{x})=-\beta\nabla_{\mathbf{x}}U(\mathbf{x})=\beta\mathbf{F}(\mathbf{x})$. For the particle systems studied here, these quantities are obtained directly from interatomic potential energies and forces.

As shown in Fig.\,\ref{fig1}b, ATLAS starts with a stochastic interpolant on the torus that connects a simple prior $p_0$ to the target distribution $p_{\mathrm{target}}$ over a time interval $t\in[0,1]$. We choose $p_0$ as the uniform distribution over the simulation cell, restricted to the zero-center-of-mass subspace. Intermediate configurations are constructed as wrapped bridges between a prior draw $\mathbf{x}_0\sim p_0$ and a terminal configuration $\mathbf{x}_1$,
\begin{equation}
    \mathbf{x}_t = \mathrm{wrap}\!\left(\mathbf{x}_0 + \alpha(t)(\mathbf{x}_1-\mathbf{x}_0) + \gamma(t)\mathbf{z}\right),
\end{equation}
where $\mathbf{z}$ denotes Gaussian noise, $\alpha(t)$ monotonically increases from $\alpha(t=0)=0$ to $\alpha(t=1)=1$, and $\gamma(t)$ is a bridge width that vanishes at both endpoints, i.e. $\gamma(t=0)=\gamma(t=1)=0$. The wrapping operation under the minimum-image convention ensures that all interpolated configurations remain inside the periodic cell.
This interpolant defines time-dependent marginals connecting $p_0$ and $p_{\mathrm{target}}$, which can be realized by forward and backward stochastic differential equations (SDEs),
\begin{equation}
\label{eq: fwd bwd SDE}
    \mathrm{d}\mathbf{x}_t=b_{\mathrm F}(\mathbf{x}_t,t)\,\mathrm{d}t+\sigma(t)\,\mathrm{d}\mathbf{w}_t,\qquad
    \mathrm{d}\mathbf{x}_t=b_{\mathrm B}(\mathbf{x}_t,t)\,\mathrm{d}t+\sigma(t)\,\mathrm{d}\bar{\mathbf{w}}_t ,
\end{equation}
where $\mathbf{w}_t,\bar{\mathbf{w}}_t$ are the standard Wiener processes, $\sigma(t)$ is a prescribed noise schedule, and $b_{\mathrm F}$ and $b_{\mathrm B}$ are the forward and backward bridge drifts. ATLAS learns both drifts ($b_{\mathrm F,\theta}(\mathbf{x}_t,t),b_{\mathrm B,\theta}(\mathbf{x}_t,t)$) with an $E(3)$-equivariant GNN (see Methods), where $b_{\mathrm F}$ generates configurations from the prior to the target, while $b_{\mathrm B}$ describes the reverse dynamics that runs backward in time. The reverse dynamics is essential for ATLAS when we estimate free energy, entropy, and ensemble averages, and conditionally steer the samples from the learned model.

Because exact Boltzmann samples are unavailable, we train ATLAS by fixed-point bootstrapping, following the strategies developed in bridge-matching samplers \cite{blessing2026bridge}, as is shown in Fig.\,\ref{fig1}b. At each iteration, the current sampler generates terminal configurations $\mathbf{x}_1$, the interatomic potential evaluates $U(\mathbf{x}_1)$ and $\mathbf{F}(\mathbf{x}_1)$, and these quantities define the target score $\nabla_{\mathbf{x}_1}\log p_{\mathrm{target}}(\mathbf{x}_1)=\beta\mathbf{F}(\mathbf{x}_1)$. For each bridge point $\mathbf{x}_t$, we construct a force-informed score target $\hat{s}(\mathbf{x}_t,t)$ which depends on $\beta\mathbf{F}(\mathbf{x}_1)$ and gives drift targets $\hat{b}_{\mathrm F}(\mathbf{x}_t,t)$ and $\hat{b}_{\mathrm B}(\mathbf{x}_t,t)$   (see Methods). 
During the learning process, ATLAS updates the weights of the $E(3)$-equivariant GNN to match the corresponding force-informed targets. Repeating this iteration cycle shown in Fig.\,\ref{fig1}b drives the sampler toward a self-consistent Boltzmann generator. 
We note that ATLAS can also be generalized over thermodynamic and chemical conditions by passing the temperature $T$ and multi-element composition vector $\mathbf{c}=\{c_j\}$ directly as conditioning inputs to $b_{\mathrm F,\theta}(\mathbf{x}_t,t,\mathbf{c},T)$ and $b_{\mathrm B,\theta}(\mathbf{x}_t,t,\mathbf{c},T)$, with $0\le c_j\le 1$ and $\sum_j c_j=1$. A single network can therefore learn a family of amorphous ensembles rather than a separate sampler for each system. 
Furthermore, because the drifts are represented by local GNNs, models trained on moderate-size supercells can be readily extrapolated to larger systems during evaluation.

Powered by this flexible architecture and training framework, ATLAS supports a broad range of downstream tasks (Fig.\,\ref{fig1}c). First, the learned forward and backward path likelihood ratios enable estimation of temperature-dependent partition function $Z(T)$, free energy $F(T)$ and entropy $S(T)$. 
Second, independently generated configurations provide ensemble estimates of structural and thermodynamic observables under the target Boltzmann distribution. 
Third, inference-time steering can efficiently explore rare regions of configuration space and recover potentials of mean force (PMF) $F(\xi)$ along prescribed collective variables $\xi$. 
Fourth, a pretrained ATLAS model can be fine-tuned or steered at inference time toward target expectation values through reward tilting, enabling observable-guided generation. 
By amortizing sampling across composition, ATLAS provides an efficient engine for multi-objective inverse design in complex chemical spaces. These capabilities are demonstrated in the following sections, with full algorithmic details, training procedures and parameter settings provided in Methods.

\begin{figure}[!h]
  \centering
  \includegraphics[width=\textwidth]{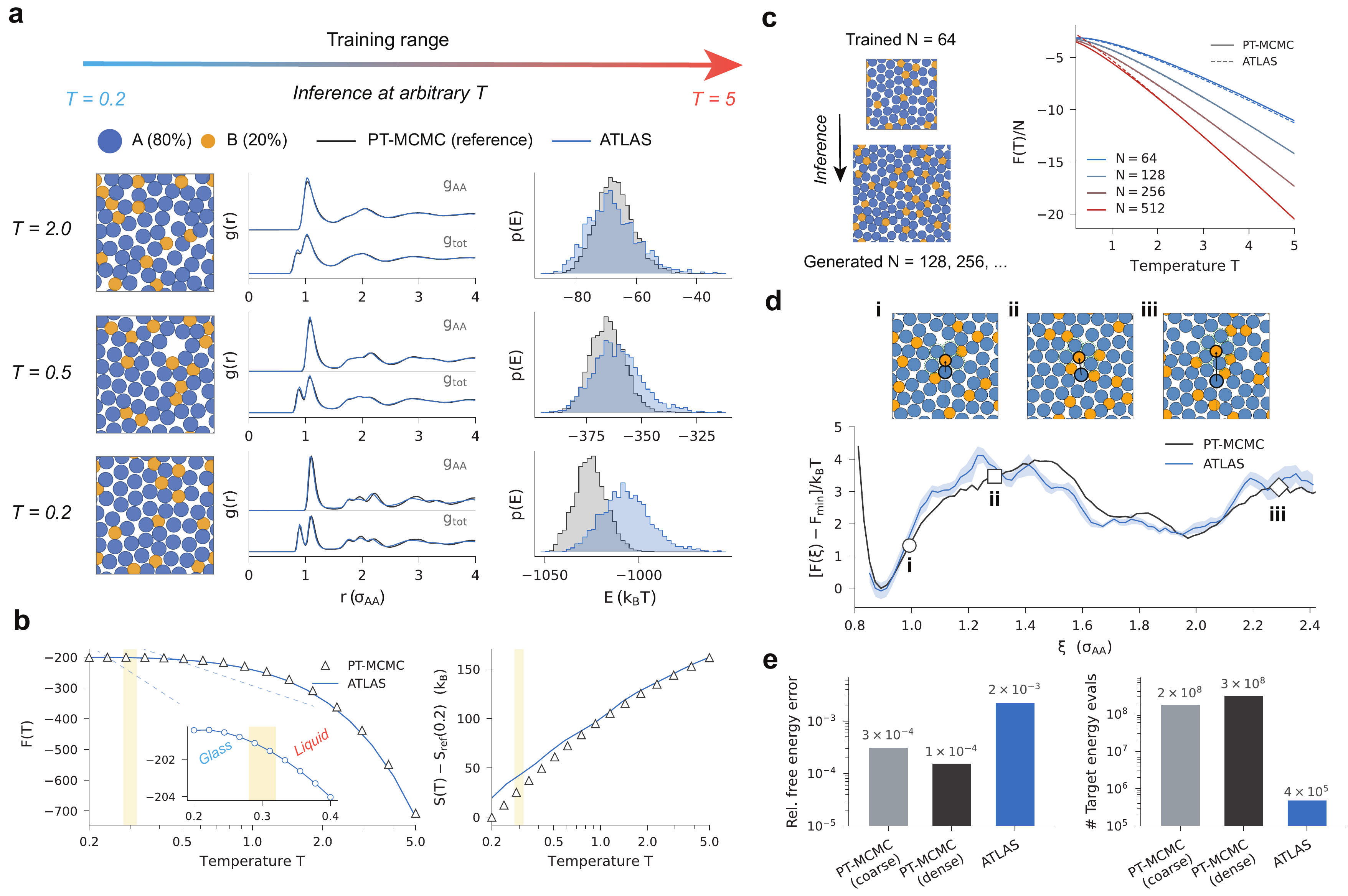}
  \caption{\textbf{Efficient sampling and estimation of ATLAS on two-dimensional Kob-Andersen systems.}
  \textbf{a.} A single ATLAS model is trained across temperatures $T=0.2\sim 5$ and evaluated at arbitrary temperatures within this range. Representative configurations, partial and total radial distribution functions, $g_{\mathrm{AA}}(r)$ and $g_{\mathrm{tot}}(r)$, and energy distributions $p(E)$ are compared with parallel tempering Markov chain Monte Carlo (PT-MCMC) at $T=2.0$, $0.5$ and $0.2$.
  \textbf{b.} Temperature-dependent Helmholtz free energy $F(T)$ (in reduced Lennard-Jones units) and entropy relative to that at $T=0.2$, $S(T)-S_{\mathrm{ref}}(0.2)$, estimated using ATLAS and PT-MCMC. The inset enlarges the low-temperature region spanning the glass-liquid crossover, with yellow shade near $T=0.3$.
  \textbf{c.} Size transferability of ATLAS trained on $N=64$ particles and applied without retraining to systems containing $N=128$, $256$ and $512$ particles. The corresponding free energies per particle are compared with size-matched PT-MCMC references.
  \textbf{d.} Inference-time sampling of a cage-escape event. A tagged type-B particle (orange) moves from (i) its initial cage, through (ii) a transition region, into (iii) a neighboring metastable cage. Sequential Monte Carlo (SMC) steering along the displacement coordinate $\xi$ are compared against PT-MCMC on the potential of mean force (PMF) estimation $F(\xi)$ along $\xi$. The blue shading denotes the estimated uncertainty from ATLAS.
  \textbf{e.} Comparison of the relative free energy error and number of target-energy evaluations for coarse and dense PT-MCMC calculations and ATLAS at $T=0.2$. Errors are evaluated relative to a converged PT-MCMC reference obtained using $\simeq 4.5\times10^{8}$ target energy evaluations.}
  \label{fig2}
\end{figure}

\subsection*{Efficient sampling and estimation for Kob-Andersen models}
To validate our approach, we first benchmark ATLAS on the two-dimensional 80:20 Kob-Andersen (KA) mixture at number density $\rho=1.0$ (Methods). Instead of training an independent sampler at each thermodynamic state, we condition a single ATLAS model on temperature and train it jointly over the continuous range $T\in[0.2,5]$ (Fig.\,\ref{fig2}a). 
Once this one-time training cost has been amortized, the model can directly generate equilibrium configurations at arbitrary temperatures within this range, thus avoiding repeated temperature-specific simulations for re-training.

As is shown in Fig.\,\ref{fig2}a, ATLAS accurately captures the substantial structural evolution from the high-temperature liquid ($T=2$) to the low-temperature ($T=0.2$) glass. Upon cooling through the glass-transition region near $T\simeq0.3$ \cite{li2016glass}, the pair-correlation functions $g(r)$ develop sharper and increasingly structured coordination peaks, reflecting the enhancement of local order in the glassy structures. 
Across the investigated temperatures, the correlation functions obtained from ATLAS closely follow those from reference PT-MCMC results. More stringently, ATLAS reproduces the corresponding potential energy distributions $p(E)$, including their temperature-dependent average values, widths and substantial probability overlap with PT-MCMC. 
Matching these distributions is considerably more demanding than reproducing low-order structural correlations, because small errors in interatomic distances can generate large energy deviations through the steep Lennard-Jones-type repulsive potential, and geometrically plausible configurations do not necessarily occur with the correct Boltzmann probabilities \cite{grenioux2025riemannian}. The simultaneous agreement in structure and energy therefore shows that ATLAS learns the true target equilibrium distribution.

The path weights from forward and backward SDEs further provide estimates of the temperature-dependent partition function $Z(T)$ and, consequently, the free energy $F(T)=-k_{\mathrm B}T\log Z(T)$ and entropy $S(T)=(\langle E\rangle-F)/T$ (more details in Methods). As shown in Fig.\,\ref{fig2}b, both quantities closely agree with the PT-MCMC references across the full training interval. 
A magnified view of the low-temperature regime reveals a subtle change in the slope of $F(T)$ upon warming above $T=0.2$, accompanied by a crossover in $S(T)$ near $T\simeq 0.3$. This thermodynamic behavior is consistent with previous numerical estimates of the glass-transition crossover in the two-dimensional KA system \cite{li2016glass}.
Moreover, the local GNN architecture of ATLAS also permits inference beyond the system size used during training. As shown in Fig.\,\ref{fig2}c, an ATLAS model trained exclusively on systems with $N=64$ particles can be applied without retraining to $N=128$, $256$ and $512$ particles per supercell. The resulting free energy estimates remain close to corresponding PT-MCMC references with less than $\sim$3\% max relative error at every temperature up to $N=256$. 
This size transfer enables training on small systems, for which reference sampling is comparatively accessible, followed by deployment on substantially larger systems. Beyond equilibrium thermodynamics, the same path weight formalism can be combined with inference-time steering to resolve rare activated processes and their PMFs (Fig.\,\ref{fig2}d), as discussed in the next subsection.

Finally, we quantify the trade-off between thermodynamic accuracy and computational cost in Fig.\,\ref{fig2}e. At the glass-forming temperature $T=0.2$, ATLAS achieves a relative free energy error of $2\times10^{-3}$ using only $\sim4\times10^{5}$ target-energy evaluations, compared with $2\times10^{8}$--$3\times10^{8}$ evaluations for PT-MCMC, depending on the trajectory length. ATLAS therefore attains below $0.2\%$ free energy error with at least 500-fold fewer energy evaluations, demonstrating its efficiency in generating independent and thermodynamically reliable amorphous configurations.

\subsection*{Conditional steering of ATLAS for target properties}
Beyond drawing equilibrium configurations from the target Boltzmann distribution, ATLAS can be used as a conditional generator on expectation values. We formulate this problem as reward tilting of the target ensemble. Specifically, given a pretrained ATLAS model for the base distribution $p_\text{base}(\mathbf{x})\propto \exp[-\beta U(\mathbf{x})]$, we define a tilted distribution
\begin{equation}
    p_{\eta}(\mathbf{x})=Z_\eta^{-1}p_\text{base}(\mathbf{x})\exp[\eta r(\mathbf{x})],
\end{equation}
where $r(\mathbf{x})$ is a user-defined reward function, $\eta$ controls the strength of the tilt, and $Z_\eta$ is the normalizing constant. Equivalently, reward tilting modifies the potential energy surface as $U_\eta(\mathbf{x})=U(\mathbf{x})-\eta r(\mathbf{x})/\beta$.

We consider two complementary ways to realize this tilted distribution, as summarized in Fig.\,\ref{fig3}a. First, the pretrained ATLAS model can be directly fine-tuned toward the tilted ensemble. In this setting, the reward is evaluated at the terminal configuration $\mathbf{x}_1$, and the model parameters are updated so that generated samples have increased probability under $p_\eta$. This strategy is natural when $r(\mathbf{x})$ can be designed as a differentiable function and when a reusable specialized generator is desired.
When retraining is costly relative to inference-time sampling, we instead use sequential Monte Carlo (SMC) for training-free tilting during generation \cite{wu2023practical,he2025rne}. SMC propagates a population of weighted particles along the generation trajectory, progressively reweighting and resampling them according to an intermediate reward function $r_t(\mathbf{x}_t)=\alpha(t)\,\eta\,r(\mathbf{x}_t)$, where $\alpha(t)$ is an annealing schedule. For differentiable rewards w.r.t. the configuration, the proposal can additionally include a gradient guidance term; for non-differentiable rewards, SMC operates through reward-based reweighting and resampling alone.

The approaches above assume that an intermediate reward $r(\mathbf{x}_t)$ can be evaluated along the generation trajectory. This is often impractical for mechanical or functional properties of amorphous materials, which are meaningful only after the generated structure has been cleaned and relaxed. The reward is therefore naturally defined only on the terminal configuration $\mathbf{x}_1$.
For such terminal-only rewards $r(\mathbf{x}_1)$, we estimate the final clean structure as $\hat{\mathbf{x}}_1(\mathbf{x}_t,t)=\mathbb{E}[\mathbf{x}_1\mid\mathbf{x}_t,t]$ and define $r_t(\mathbf{x}_t)\approx r(\hat{\mathbf{x}}_1(\mathbf{x}_t,t))$. In ATLAS, $\hat{\mathbf{x}}_1$ is obtained from the learned network through Tweedie’s formula \cite{robbins1992empirical}. This enables SMC to guide or reweight intermediate particles using rewards evaluated on their predicted terminal structures. Further details are provided in Methods.

\begin{figure}[!h]
  \centering
  \includegraphics[width=0.95\textwidth]{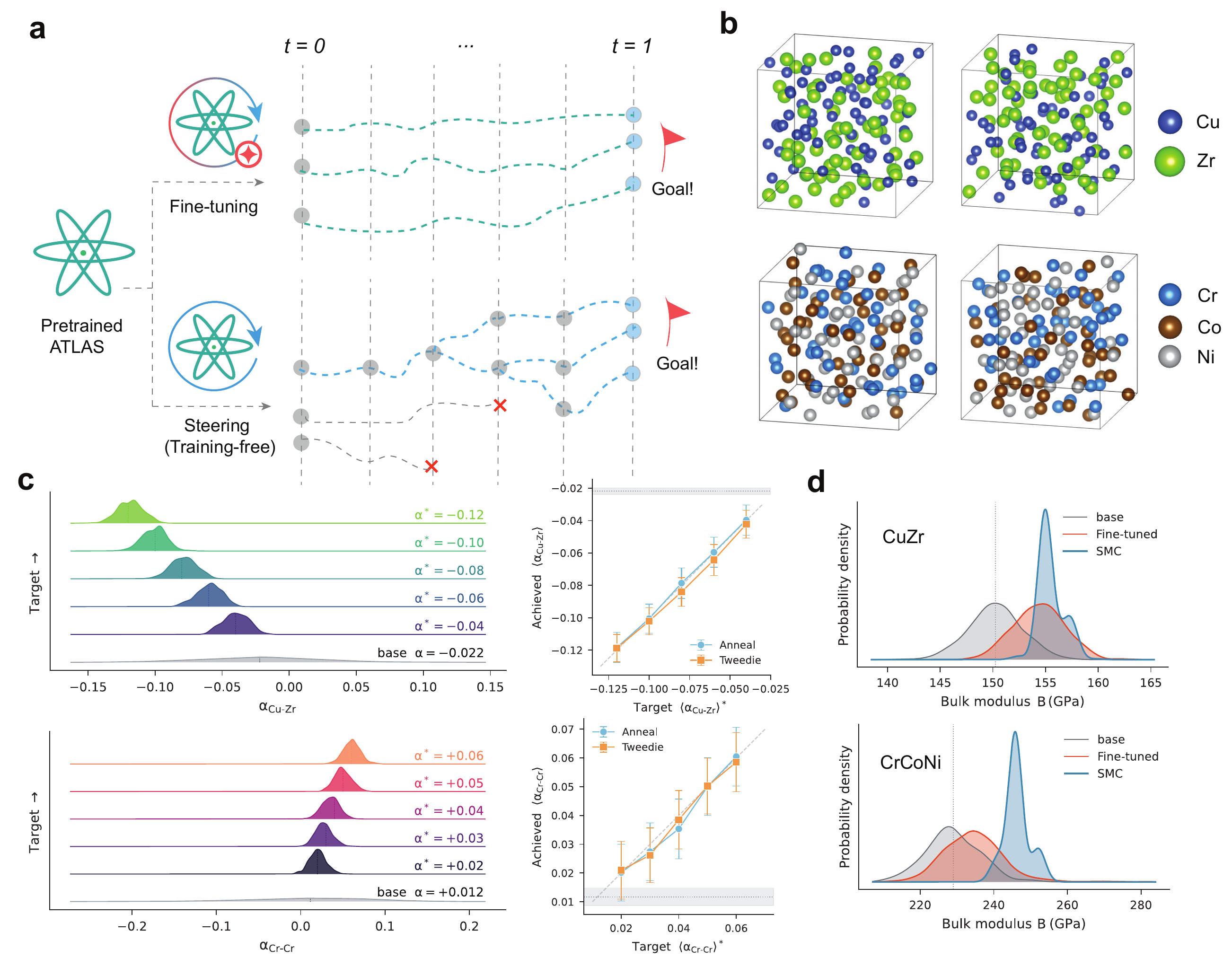}
  \caption{\textbf{Conditional generation and property tilting with ATLAS on metallic glasses.} 
  \textbf{a.} Schematic of conditional generation from a pretrained ATLAS model. The sampler can be adapted by direct fine-tuning or steered at inference time without retraining. In the training-free route, particles evolve along the generative trajectory and are progressively biased toward a target reward, with unsuccessful trajectories downweighted or removed and successful trajectories enriched.
  \textbf{b.} Representative generated configurations for CuZr and CrCoNi metallic glasses.
  \textbf{c.} Short-range-order (SRO) conditioning using the Warren-Cowley parameter $\alpha$. Left panels: target values of Cu-Zr ordering in CuZr (top) and Cr-Cr ordering in CrCoNi (bottom) are imposed over a sequence of $\alpha^*$ values, producing controlled shifts of the generated $\alpha$ distributions under the annealed-reward SMC steering. Right panels compare achieved and target SRO values for annealed and Tweedie-style inference-time steering. Error bars denote the standard deviation of sampled $\alpha$.
  \textbf{d.} Property-guided generation by tilting towards higher bulk modulus. Bulk-modulus distributions generated by the base ATLAS model are compared with those obtained after direct fine-tuning and SMC-based inference-time steering, for both CuZr and CrCoNi metallic glasses.}
  \label{fig3}
\end{figure}

We first demonstrate inference-time steering by estimating the PMF for an activated cage-escape event in the two-dimensional KA system at $T=0.2$ (Fig.\,\ref{fig2}d). We tag a type-B particle and a nearby type-A particle and define the reaction coordinate $\xi$ as the axial component of their minimum-image separation. The reaction coordinate is partitioned into overlapping umbrella windows, each sampled by steering the pretrained ATLAS model toward the corresponding biased ensemble using SMC.
The resulting weights correct the guided proposals and provide the relative normalization of the windows, which are reweighted and combined to obtain the PMF $F(\xi)$. As shown in Fig.\,\ref{fig2}d, ATLAS closely reproduces the reference PT-MCMC result, yielding a cage-escape free energy barrier of $\Delta F^\ddagger=4.20\pm0.38\,k_{\mathrm B}T$, compared with $3.97\,k_{\mathrm B}T$ from the reference calculation.

We next apply ATLAS to two representative metallic-glass systems, binary CuZr and ternary CrCoNi, using embedded-atom method potentials \cite{daw1984embedded} as the target energy models (see Methods). 
Representative independently generated configurations are shown in Fig.\,\ref{fig3}b. As a baseline, the pretrained ATLAS models capture the contrasting SRO tendencies of the two alloys, as reflected by the gray distributions in Fig.\,\ref{fig3}c. We quantify chemical ordering using the Warren-Cowley parameter,
\begin{equation}
\alpha_{ij}=1-\frac{P_{ij}}{c_j},
\label{SRO}
\end{equation}
where $P_{ij}$ is the probability that a nearest neighbor of an atom of species $i$ is of species $j$, and $c_j$ is the global concentration of species $j$. The base CuZr ensemble has $\alpha_{\mathrm{Cu-Zr}}=-0.022$, indicating preferential Cu-Zr coordination, consistent with experimental observations of enhanced Zr coordination around Cu atoms in Cu-Zr metallic glasses \cite{ma2009efficient}. 
By contrast, the base CrCoNi ensemble has a weakly positive $\alpha_{\mathrm{Cr-Cr}}=+0.012$, indicating a slight suppression of nearest-neighbor Cr-Cr pairs, qualitatively matching experimental observations that Cr--Cr nearest-neighbor pairs are disfavored in Cr-Co-Ni alloys \cite{zhang2020short}.

We then condition the generated ensembles toward prescribed SRO values using the reward $r(\mathbf{x}_1)=-[\alpha_{ij}(\mathbf{x}_1)-\alpha_{ij}^{*}]^2$, which penalizes deviations from the target SRO. This formulation is particularly useful when experimental SRO measurements are available, because the measured value can be imposed directly to generate atomistic ensembles consistent with experiment for subsequent structural and property calculations \cite{zarrouk2025molecular}. 
It can also be used to diagnose or correct an imperfect interatomic potential when its unconstrained ensemble fails to reproduce the observed chemical ordering.
Because the Warren–Cowley parameter depends only on the instantaneous atomic configuration and species identities, it can be evaluated directly on the noisy intermediate state $\mathbf{x}_t$, providing a well-defined annealed reward $r_t$. 
As shown in Fig.\,\ref{fig3}c, inference-time tilting produces systematic shifts of the full SRO distributions across target ranges of $\alpha_{\mathrm{Cu-Zr}}^{*}=-0.04$ to $-0.12$ and $\alpha_{\mathrm{Cr-Cr}}^{*}=+0.02$ to $+0.06$.
Alternatively, the reward can be evaluated on the Tweedie estimate of the clean terminal structure, $r(\hat{\mathbf{x}}_1(\mathbf{x}_t,t))$. The right panels of Fig.\,\ref{fig3}c show that the achieved mean SRO values from both SMC strategies closely track the prescribed targets. 
These results demonstrate that SRO can be controlled effectively at inference time, even when the reward is evaluated on strongly perturbed intermediate configurations, without any model fine-tuning.

Beyond SRO, which can be computed directly from atomic coordinates, we then consider the more challenging task of bulk-modulus optimization.
The bulk modulus $B$ is obtained only after structural relaxation and an equation-of-state calculation involving multiple potential-energy evaluations. It is therefore expensive to evaluate repeatedly and is not directly differentiable with respect to the generated coordinates. We address this limitation by approximating a differentiable surrogate model $\hat{B}(\mathbf{x}_1)$ (see Methods) and using its prediction as the terminal reward, $r(\mathbf{x}_1)=\hat{B}(\mathbf{x}_1)$, to be maximized. With $\eta$ denoting the tilting strength, this defines the tilted ensemble
$p_{T,\eta}(\mathbf{x}_1)\propto p_T(\mathbf{x}_1)\exp[\eta\hat{B}(\mathbf{x}_1)].$
The same surrogate can be used either to fine-tune the pretrained sampler or to steer it at inference time using SMC. Both strategies systematically shift the generated ensembles toward larger bulk moduli, as shown in Fig.\,\ref{fig3}d. However, they produce qualitatively different tilted distributions. Fine-tuning yields a smoother displacement of the original distribution and largely preserves its variance and overall shape. By contrast, SMC more strongly enriches high-modulus configurations, but can substantially concentrate its generation within a narrower subset of structures. Therefore, fine-tuning is more conservative when preserving ensemble diversity is important, whereas SMC is better suited to efficiently enriching rare, high-performing configurations.
The SRO-matching and bulk-modulus-tilting results above establish ATLAS as a flexible conditional sampler that can accommodate rewards defined on noisy intermediate states, terminal-only objectives and non-differentiable materials properties that are computationally costly to evaluate.

\subsection*{ATLAS as a foundation model to navigate compositional inverse design}
Beyond generating amorphous ensembles at a fixed composition, ATLAS can serve as an amortized sampling engine for materials inverse design across chemical composition space (Fig.\,\ref{fig4}a). 
Rather than training an independent sampler for every alloy, we pretrain a single composition-conditioned ATLAS model over a broad set of chemical elements and reuse it across chemically diverse systems at the same target temperature, with the underlying energetics provided by either a classical interatomic potential or a higher-fidelity pretrained MLIP. For each proposed composition $\mathbf{c}$, ATLAS generates an ensemble of amorphous structures ${\mathbf{x}_i}$, from which ensemble-averaged target properties are evaluated. 
These target properties are returned to an agentic LLM evolutionary algorithm (LLM-EA) framework, which reasons over the accumulated search history, proposes new compositions, and asynchronously launches and monitors multiple sampling and property-evaluation jobs (Methods). 
The LLM agent naturally supports multi-objective design by evaluating candidates from their full property vectors rather than a fixed scalarization, retaining non-dominated solutions, and proposing new compositions that expand the Pareto frontier for the multiple targets.

\begin{figure}[!h]
  \centering
  \includegraphics[width=\textwidth]{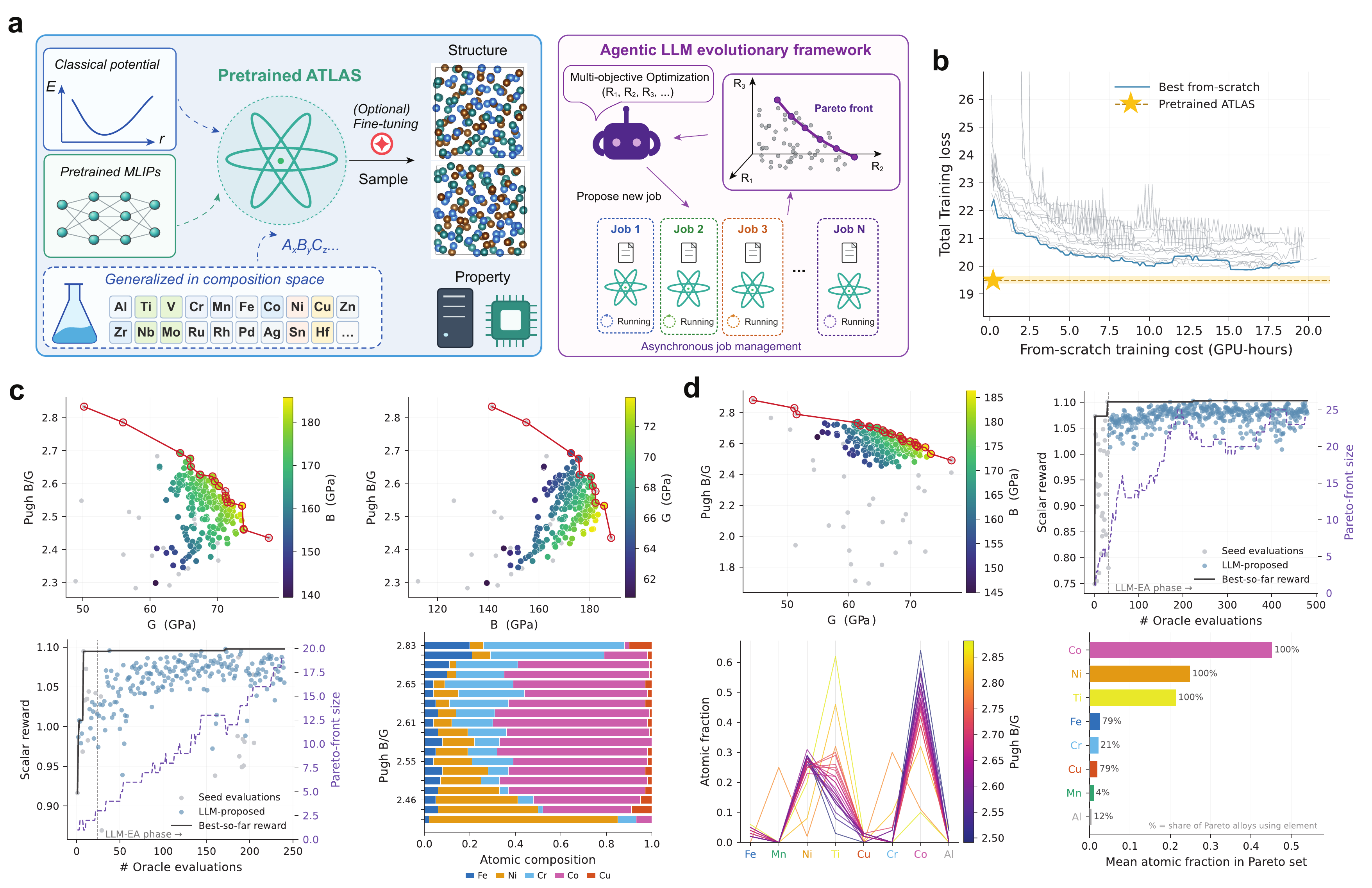}
  \caption{{\textbf{Agentic inverse design of amorphous materials with ATLAS as a foundation model.}
  \textbf{a.} Schematic of ATLAS-driven composition-space optimization. A pretrained ATLAS model is amortized over chemical compositions and coupled to either classical force fields or MLIPs. For each proposed composition, ATLAS samples amorphous structures and returns ensemble-evaluated target properties to an LLM-based agentic optimizer, which is prompted to propose asynchronous sampling jobs and iteratively searches for Pareto-optimal compositions.
  \textbf{b.} Comparison between from-scratch training and pretrained ATLAS initialization for compositions drawn from the Fe-Mn-Ni-Ti-Cu-Cr-Co-Al pool. Blue: best from-scratch training trajectory from 16 randomly scanned compositions. Yellow dashed line and shading: zero-shot loss of foundation ATLAS and its standard deviation.
  \textbf{c.} LLM-based evolutionary optimization in the five-component Fe-Ni-Cr-Co-Cu composition space using an EAM force field, jointly optimizing bulk modulus $B$, shear modulus $G$, and Pugh ratio $B/G$. Top: discovered trade-offs in the $G$-$B/G$ and $B$-$B/G$ subspaces, with red markers indicating the Pareto frontier. Bottom: scalar reward (see main text) and Pareto-front size versus oracle evaluations, and representative Pareto-optimal glass compositions.
  \textbf{d.} Extension of \textbf{c} to selecting up to five elements from the eight-element Fe-Mn-Ni-Ti-Cu-Cr-Co-Al pool and optimizing their compositions using the MACE-MPA-0 potential. Top: Pareto frontier in the $G$--$B/G$ space and improvement of scalar reward and Pareto-front size during the search. Bottom: atomic fractions of Pareto-optimal glasses colored by $B/G$, and mean atomic fraction of each element in the Pareto set, with percentages indicating the fraction of Pareto glasses containing that element.}}
  \label{fig4}
\end{figure}

We first couple ATLAS to the LLM-EA framework in the Fe-Ni-Cr-Co-Cu space using an EAM potential (Fig.\,\ref{fig4}c). The design objective is to identify high-entropy metallic glasses that jointly optimize the bulk modulus $B$, shear modulus $G$, and Pugh ratio $B/G$, thereby balancing resistance to volumetric compression, shear rigidity and hardness, and ductility.
The agentic framework maintains up to $N=32$ asynchronous ATLAS sampling and property-evaluation jobs. As jobs finish, it collects the resulting $B$, $G$, and $B/G$ values and returns them to the LLM. At each update, the accumulated candidates are downsampled with priority determined by their distance to the current Pareto frontier, such that near-front solutions are retained more frequently while more distant candidates are sampled less often. 
Conditioned on this filtered search history, the LLM is prompted to improve all three objectives jointly while preserving exploration of under-sampled regions of composition space, and then proposes a new composition to replace the completed job. The full details of LLM-EA implementation are provided in Methods.
For visualization of the optimization trajectory, we additionally report the scalar reward which is not used as the optimization objective, $R_{\mathrm{vis}}=0.25\times B/B_0+0.25\times G/G_0+0.5\times (B/G)/(B_0/G_0)$, where $B_0$, $G_0$ and $(B/G)_0$ are reference values used to normalize the three objectives.
As shown in Fig.\,\ref{fig4}c, LLM-EA resolves well-defined Pareto frontiers in both the $G$-$B/G$ and $B$-$B/G$ projections, revealing a clear trade-off between stiffness and the ductility proxy. The Pareto set also exhibits a systematic chemical progression: Cr-rich compositions populate the high-$B/G$ region, whereas Ni- and Co-rich alloys dominate the high-$G$ and high-$B$ end of the frontier, with Fe and Cu acting mainly as lower-concentration tuning elements. 
The simultaneous growth of the best scalar score and Pareto-set size indicates that the search improves performance while expanding the range of accessible compromises. Rather than identifying a single optimum, the optimization therefore extracts chemically interpretable design rules linking composition to the stiffness-ductility trade-off.

Before scaling to even larger chemical space, we quantify the computational advantage of composition amortization (Fig.\,\ref{fig4}b). The composition-conditioned ATLAS model incurs a one-time training cost of 36 GPU hours, yet its zero-shot loss is already lower than that of the best composition-specific model trained from scratch. 
Because each from-scratch model requires approximately 20 GPU hours, a full-loop active learning campaign, such as the 480 evaluations to be shown in Fig.\,\ref{fig4}d, would otherwise cost about 9,600 GPU hours. Compared with 36 GPU hours for the amortized model, there is a nearly 300-fold reduction in terms of GPU training cost for pretrained ATLAS model. Pretraining therefore enables repeated composition-space queries without retraining.
Enabled by this amortization, we next extend the workflow to the substantially larger task of selecting up to five elements from an eight-element pool (Fe-Mn-Ni-Ti-Cu-Cr-Co-Al) and optimizing their continuous atomic fractions (Fig.\,\ref{fig4}d). Here, the target distribution is defined by the universal MACE-MPA-0 potential \cite{batatia2022mace,batatia2025foundation}, which expands the accessible chemical space without requiring a system-specific EAM potential for every elemental combination. 
Although MACE-MPA-0 is substantially more expensive to evaluate than EAM potentials, ATLAS requires only $\sim10^6$ target energy evaluations during one-time training and can subsequently be reused across the entire eight-element composition space, making the MLIP-powered inverse-design campaign computationally practical.

Despite the combinatorial element-selection problem and high-dimensional composition space, the LLM-EA search progressively improves the best scalar score while expanding the Pareto set. As shown in Fig.\,\ref{fig4}d, the candidate compositions converge to a well-defined frontier in the $G$-$B/G$ plane within 480 ATLAS evaluations and exhibit strong elemental preferences. 
Specifically, Co, Ni and Ti appear in every Pareto-optimal alloy and constitute the dominant fractions. This preference is physically plausible, because Co and Ni are relatively stiff constituent elements, whereas the substantially larger atomic radius of Ti introduces pronounced size and chemical mismatch with Co and Ni, which is commonly associated with stabilized supercooled liquids and enhanced glass-forming ability \cite{inoue2000stabilization,takeuchi2005classification}.
Additionally, Fe and Cu occur in approximately $79\%$ of the Pareto set, typically at relatively low concentrations, whereas Cr is retained in only about $21\%$ of the optimal compositions, and Mn and Al are nearly absent. 
The resulting Pareto set therefore provides not a single optimum but a batch of experimentally actionable candidates spanning different stiffness-ductility compromises, from which compositions can be selected according to the intended application and synthesized in parallel. The same ATLAS\,+\,LLM-EA workflow can also be readily transferred to alternative elemental pools and target properties. 
However, because MACE-MPA-0 and other universal MLIPs are not guaranteed to quantitatively predict accurate amorphous mechanical properties, these candidates should be regarded as model-guided hypotheses, which require validation through higher-fidelity calculations or experimental closed-loop refinement \cite{fragapane2026amorphous}.

\section*{Discussion}
In this work, we have introduced ATLAS, a foundation diffusion sampler for amorphous matter. ATLAS directly generates independent configurations from target thermodynamic ensembles, enabling ensemble-property prediction, entropy and free energy estimation, observation-constrained structure generation and property-directed sampling within the same computational framework. 
We demonstrate these capabilities across systems ranging from the Kob-Andersen glass former to binary metallic glasses and quinary high-entropy metallic glasses, using both empirical interatomic potentials and more expressive MLIPs. 
The same sampler generalizes across temperature and chemical composition, transfers across system sizes, and can be coupled to external agents for multi-objective materials design. This amortization is particularly valuable for expensive MLIPs with millions of parameters, because ATLAS can directly produce independent configurations after one-time training, instead of repeatedly simulating trajectories through femtosecond-scale MD integration.
Costly potential evaluations through MLIPs can therefore be reused across thermodynamic conditions, compositions and downstream tasks. The resulting combination of direct ensemble sampling, controllable generation and agentic optimization provides a promising route towards scalable inverse design in complex disordered materials.

Despite these advances, several limitations exist and define the next stage of development. 
Firstly, the present implementation focuses on fixed-$NVT$ ensembles, with metallic-glass volumes assigned using a composition-based mixing rule (see Methods). Extending ATLAS to the $NPT$ ensemble would allow density and local packing to respond consistently to temperature, composition and pressure. More general grand-canonical or semi-grand-canonical formulations, incorporating particle insertion, deletion and species exchange, could further enable flexible joint sampling of composition and structure within a unified generative framework.
Secondly, the applicability of ATLAS to substantially larger systems remains to be demonstrated. The supercells considered here typically contain 128 atoms, sufficient to capture local amorphous order and bulk thermodynamic properties, but too small to resolve phenomena involving extended structural features such as interfaces, dislocations and grain boundaries, which may require systems containing thousands to tens of thousands of atoms.
Finally, ATLAS samples thermodynamic configurations, bypassing the physical time evolution. Combining direct thermodynamic sampling with learned dynamical models or trajectory-based rare-event methods could provide access to relaxation pathways and dynamical heterogeneity, thereby shedding light on the physics of the glass transition and connecting the generated ensembles to mode-coupling theory and broader non-equilibrium glass phenomenology \cite{bengtzelius1984dynamics,janssen2018mode}.

Nevertheless, ATLAS provides the basis for a foundation model of disordered materials. Beyond metallic glasses, the same framework could be extended to oxide glasses, amorphous ceramics, disordered semiconductors, polymers and other multicomponent systems whenever their energetics can be described by a sufficiently reliable interatomic potential. 
Moreover, the target constraints need not remain limited to scalar structural or mechanical observables. Diffraction patterns, vibrational, electronic and optical spectra, and other experimental signals could be incorporated as vectorized constraints, allowing the sampler to refine atomistic ensembles against multiple complementary measurements. 
Likewise, ATLAS could enable multi-property inverse design of functional amorphous materials, targeting optical response, electrical transport, thermal conductivity and catalytic activity while navigating trade-offs across the corresponding Pareto frontiers.
The same design principle of ATLAS could also be applied to sample chemically disordered alloys, defect species, interfaces and beyond. 
By replacing repeated trajectory generation with an amortized and controllable sampler, ATLAS establishes a foundation-scale framework in which thermodynamic prediction, inverse structural inference and materials design become different queries to a single generative model of amorphous materials. Just as diffusion-based generative models have transformed \textit{de novo} protein and crystal structure design \cite{Watson2023DeND,zeni2025generative}, we anticipate that models such as ATLAS will similarly reshape materials discovery across the broader space of disordered systems.

\section*{Methods} 
\subsection*{Stochastic interpolants on the torus}
We consider $N$ atoms in a fixed periodic supercell at temperature $T$, i.e., the canonical (NVT) ensemble. In fractional coordinates, a configuration is a point $\mathbf{x}\in\mathbb{T}^{D}$ on the flat torus, where $\mathbb{T}=\mathbb{R}/\mathbb{Z}$, $D=dN$, and $d$ is the spatial dimension. The target is the Boltzmann distribution
\begin{equation}
    p_{\mathrm{target}}(\mathbf{x})=\frac{1}{Z}\,e^{-\beta U(\mathbf{x})},\qquad \beta=\frac{1}{k_{\mathrm B}T},\qquad Z=\int_{\mathbb{T}^D} e^{-\beta U(\mathbf{x})}\,\mathrm{d}\mathbf{x},
    \label{eq:m-target}
\end{equation}
where $U$ is a differentiable interatomic potential evaluated under full periodic boundary conditions and $Z$ is the partition function. ATLAS requires only point-wise evaluations of the potential $U(\mathbf{x})$ and the corresponding forces $\mathbf{F}(\mathbf{x})=-\nabla_{\mathbf{x}} U(\mathbf{x})$, which determine the target score $\nabla_{\mathbf{x}}\log p_{\mathrm{target}}(\mathbf{x})=\beta \mathbf{F}(\mathbf{x})$; no reference configurations are used at any stage.
ATLAS builds on stochastic interpolants \cite{albergo2025stochastic} which prescribe the process $(\mathbf{x}_t)_{t\in[0,1]}$ that interpolates between a prior draw $\mathbf{x}_0\sim p_0$ and a target draw $\mathbf{x}_1\sim p_{\mathrm{target}}$,
\begin{equation}
    \mathbf{x}_t=\mathrm{wrap}\!\big(\mathbf{x}_0+\alpha(t)\,\boldsymbol{\Delta}+\gamma(t)\,\mathbf{z}\big),
    \qquad
    \boldsymbol{\Delta}=\mathrm{wrap}(\mathbf{x}_1-\mathbf{x}_0),
    \quad \mathbf{z}\sim\mathcal{N}(\mathbf{0},I),
    \label{eq:m-interpolant}
\end{equation}
where $\alpha(t)$ increases monotonically from $\alpha(0)=0$ to $\alpha(1)=1$ and the interpolant width $\gamma(t)$ is positive for $0<t<1$ and vanishes at both endpoints, pinning $\mathbf{x}_t$ to $\mathbf{x}_0$ at $t=0$ and to $\mathbf{x}_1$ at $t=1$. Here $\mathrm{wrap}(y)=y-\lfloor y+\tfrac12\rfloor\in[-\tfrac12,\tfrac12)$, applied componentwise, folds any displacement back into the cell, so that $\boldsymbol{\Delta}$ is the minimum-image displacement between the two endpoints. Since $U$ is invariant under rigid translations of all atoms, we further remove the global translation mode by restricting all distributions to the zero-center-of-mass subspace. As prior we take the uniform distribution over the cell, $p_0=\mathrm{Unif}(\mathbb{T}^D)$.
The marginals $p_t$ defined by \eqref{eq:m-interpolant} can be realized by a whole family of stochastic processes. All of them are built from two fields: a velocity $b$ and a score $s=\nabla_{\mathbf{x}}\log p_t$, both conditional expectations over the interpolant,
\begin{equation}
    b(\mathbf{x},t)=\mathbb{E}\big[\dot\alpha(t)\,\boldsymbol{\Delta}+\dot\gamma(t)\,\mathbf{z}\,\big|\,\mathbf{x}_t=\mathbf{x}\big],
    \qquad
    s(\mathbf{x},t)=-\frac{1}{\gamma(t)}\,\mathbb{E}\big[\mathbf{z}\,\big|\,\mathbf{x}_t=\mathbf{x}\big],
    \label{eq:m-bs}
\end{equation}
where the dot denotes a time derivative and the second expression is the denoising score identity. The velocity transports probability mass from the prior to the target; the score corrects for injected noise. Note that the conditional expectations in \eqref{eq:m-bs} average over interpolants whose endpoints include target samples $\mathbf{x}_1\sim p_{\mathrm{target}}$ which we do not assume access to. How ATLAS trains without access to target samples is explained below. For an arbitrary noise schedule $\sigma(t)\ge 0$, the forward and backward drift $b_{\mathrm F},b_{\mathrm B}$ in \eqref{eq: fwd bwd SDE} can be written in terms of the velocity and score via
\begin{align}
    \mathrm{d}\mathbf{x}_t&=b_{\mathrm F}(\mathbf{x}_t,t)\,\mathrm{d}t+\sigma(t)\,\mathrm{d}\mathbf{w}_t=\Big[b(\mathbf{x}_t,t)+\tfrac{\sigma(t)^2}{2}\,s(\mathbf{x}_t,t)\Big]\,\mathrm{d}t+\sigma(t)\,\mathrm{d}\mathbf{w}_t,
    &&\mathbf{x}_0\sim p_0,
    \label{eq:m-fwd}\\
    \mathrm{d}\mathbf{x}_t&=b_{\mathrm B}(\mathbf{x}_t,t)\,\mathrm{d}t+\sigma(t)\,\mathrm{d}\bar{\mathbf{w}}_t=\Big[b(\mathbf{x}_t,t)-\tfrac{\sigma(t)^2}{2}\,s(\mathbf{x}_t,t)\Big]\,\mathrm{d}t+\sigma(t)\,\mathrm{d}\bar{\mathbf{w}}_t,
    &&\mathbf{x}_1\sim p_{\mathrm{target}},
    \label{eq:m-bwd}
\end{align}
which share the same marginals $p_t$ for every $\sigma$; here $\mathbf{w}_t$ and $\bar{\mathbf{w}}_t$ denote standard Wiener processes running forward and backward in time, and the state is wrapped after every integration step. Setting $\sigma=0$ recovers the deterministic probability-flow ODE. Generation amounts to integrating the forward SDE \eqref{eq:m-fwd} from uniform noise; the backward SDE \eqref{eq:m-bwd} is its time reversal and is what enables efficient free energy and entropy estimation. ATLAS learns both fields, $b_\theta$ and $s_\theta$, as two heads of a single equivariant graph neural network with parameters $\theta$.

\subsection*{Forces as training signal}
Neither $b$ nor $s$ is observed directly, but both are conditional expectations which can be learned by a least-squares regression: for any random vector $\hat{\mathbf{s}}$ satisfying $s(\mathbf{x},t)=\mathbb{E}[\hat{\mathbf{s}}\,|\,\mathbf{x}_t=\mathbf{x}]$---such as the denoising residual $-\mathbf{z}/\gamma(t)$ in \eqref{eq:m-bs}---minimizing $\mathbb{E}\|s_\theta(\mathbf{x}_t,t)-\hat{\mathbf{s}}\|^2$ over functions yields $s_\theta=s$.
The central identity underlying ATLAS is an alternative representation of this kind in which the score is estimated by the rescaled interatomic force: on the torus with a uniform prior, the target score identity\cite{de2024target}
\begin{equation}
    s(\mathbf{x},t)=\mathbb{E}\big[\beta\mathbf{F}(\mathbf{x}_1)\,\big|\,\mathbf{x}_t=\mathbf{x}\big]
    \label{eq:m-tsi}
\end{equation}
holds exactly for all $t$. The form of \eqref{eq:m-tsi} is specific to the torus with a uniform prior: in contrast to its Euclidean counterparts, the force enters with weight one and no time-dependent prefactor. See the Supplementary Information (SI) for a proof. In words: wherever an interpolant path visits, the interatomic force evaluated at its endpoint is an unbiased estimate of the score there. Hence, physics enters the model through forces rather than reference data. However, as in \eqref{eq:m-bs}, the expectation in \eqref{eq:m-tsi} is taken over interpolants with endpoints $\mathbf{x}_1\sim p_{\mathrm{target}}$, which are unavailable; the fixed-point iteration described next replaces them with the model's own samples.

\subsection*{Fixed-point training without reference data}
ATLAS trains by a fixed-point iteration in the spirit of bridge-matching samplers \cite{blessing2026bridge} and related fixed-point diffusion samplers \cite{havens2025adjoint,havens2026flow,liu2026adjoint}: the unavailable target samples are replaced by samples from the current model, $\mathbf{x}_1\sim p_{\mathrm{model}}$, the distribution obtained by integrating the forward SDE \eqref{eq:m-fwd} with the current parameters. Each round consists of four steps, illustrated in Fig.\,\ref{fig:m-fixedpoint}: (i) \emph{generate}: integrate \eqref{eq:m-fwd} to produce a batch of terminal configurations $\mathbf{x}_1\sim p_{\mathrm{model}}$; (ii) \emph{label}: evaluate the forces $\mathbf{F}(\mathbf{x}_1)$ of these configurations and store the pairs in a replay buffer; (iii) \emph{interpolate}: draw fresh $(\mathbf{x}_0,\mathbf{z},t)$ and form intermediate states $\mathbf{x}_t$ via \eqref{eq:m-interpolant} toward buffered configurations; (iv) \emph{regress}: update both heads by minimizing
\begin{equation}
    \mathcal{L}(\theta)=\mathbb{E}\Big[\big\|s_\theta(\mathbf{x}_t,t)-\hat{\mathbf{s}}_t\big\|^2+\big\|b_\theta(\mathbf{x}_t,t)-\hat{\mathbf{b}}_t\big\|^2\Big],
    \qquad
    \hat{\mathbf{s}}_t=\beta\mathbf{F}(\mathbf{x}_1),
    \quad
    \hat{\mathbf{b}}_t=\dot\alpha(t)\,\boldsymbol{\Delta}-\dot\gamma(t)\gamma(t)\,\hat{\mathbf{s}}_t,
    \label{eq:m-loss}
\end{equation}
with the expectation taken over $t$, $\mathbf{x}_0\sim p_0$, $\mathbf{z}\sim\mathcal{N}(\mathbf{0},I)$ and buffered $\mathbf{x}_1$. The Boltzmann density is the unique fixed point of this iteration \cite{blessing2026bridge}: if $p_{\mathrm{model}}=p_{\mathrm{target}}$, the regression targets coincide with the true fields $(b,s)$ of the interpolant and training is stationary, and no other distribution is self-consistent in this sense; see SI~\ref{si: uniqueness}. The force term in \eqref{eq:m-loss} provides the mechanism that drives the iteration toward this fixed point: each generated configuration, however far from equilibrium, is labeled with the true local direction of increasing Boltzmann probability, steering the sampled ensemble toward the target basins. Because every force evaluation is stored and reused across many gradient steps, the number of force evaluations---the dominant cost under machine-learning interatomic potentials remains small compared to classical sampling methods, as shown in Figure \ref{fig2}.

\subsection*{Conditional training and amortization}
The fixed-point scheme applies verbatim to any family of Boltzmann targets $p_{\mathrm{target}}(\mathbf{x}\,|\,\mathbf{c})\propto e^{-\beta U(\mathbf{x};\,\mathbf{c})}$ indexed by a conditioning variable $\mathbf{c}$, which may enter through the temperature, the potential, or both. We therefore condition both heads, $b_\theta(\mathbf{x},t,\mathbf{c})$ and $s_\theta(\mathbf{x},t,\mathbf{c})$, and let each training batch draw conditions at random, generate configurations under those conditions, and label them with the corresponding forces. Hence, a single network amortizes an entire family of Boltzmann ensembles, which is in contrast to conventional simulation, where every new condition requires a separate equilibration run. In this work we explore two instances: the temperature, $\mathbf{c}=T$, which enters the training signal only through the rescaled force $\beta\mathbf{F}$, and the chemical composition, $\mathbf{c}=(c_1,\dots,c_n)$ with $c_j\ge0$ and $\sum_j c_j=1$, which enters through the species-dependent potential; our models are amortized over one of the two at a time. The trained network can then be queried at conditions never used during training. A detailed description of the training algorithm can be found in SI~\ref{si:algorithm}.

\subsection*{Free energy, entropy and ensemble averages} 
Expectation values under the target Boltzmann density are obtained from the generative model by importance reweighting, which requires the ratio $p_{\mathrm{target}}(\mathbf{x})/p_{\theta}(\mathbf{x})$ between the target and the model marginal at $t=1$. The likelihood $p_{\theta}$ is accessible through the probability-flow ODE, but each evaluation entails integrating the continuity equation along with the divergence of the learned velocity field, at a cost that precludes its use for large sample sizes. We therefore reweight on the entire diffusion path rather than on its endpoint \cite{richter2024improved,vargas2024transport,du2026feat}: Discretizing $[0,1]$ into steps $t_0=0<\dots<t_K=1$, the Euler--Maruyama transition kernels of the learned forward SDE \eqref{eq:m-fwd},
\begin{equation}
\overrightarrow{p}_k(\mathbf{x}_{k+1}\,|\,\mathbf{x}_k)=\mathcal{N}_{\mathbb{T}}\Big(\mathbf{x}_{k+1};\;\mathbf{x}_k+\big[b_\theta+\tfrac{\sigma^2}{2}s_\theta\big](\mathbf{x}_k,t_k)\,\Delta t_k,\ \sigma(t_k)^2\Delta t_k\,I\Big),
    \label{eq:m-kernels}
\end{equation}
and, analogously, the backward kernel $\overleftarrow{p}_k(\mathbf{x}_k\,|\,\mathbf{x}_{k+1})$ with drift $b_\theta-\tfrac{\sigma^2}{2}s_\theta$ evaluated at $(\mathbf{x}_{k+1},t_{k+1})$, are wrapped Gaussians on the torus with $\mathcal{N}_{\mathbb{T}}\big(\mathbf{x};\,\cdot,\,\cdot\big)
\coloneqq \sum_{\mathbf{z}\in\mathbb{Z}^d}\mathcal{N}\big(\mathbf{x}+\mathbf{z};\,\cdot,\cdot\big)$; see SI~\ref{app:torus-kernels} for further details. Each generated trajectory then carries a \textit{path weight}
\begin{equation}
    \log w(\mathbf{x}_{0:K})
    =-\beta U(\mathbf{x}_K)-\log p_0(\mathbf{x}_0)
    +\sum_{k=0}^{K-1}\log\frac{\overleftarrow{p}_k(\mathbf{x}_k\,|\,\mathbf{x}_{k+1})}{\overrightarrow{p}_k(\mathbf{x}_{k+1}\,|\,\mathbf{x}_k)},
    \label{eq:m-logw}
\end{equation}
where $\log p_0$ is a known constant for the uniform prior; note that potential energies are only required during evaluation, not during training where forces suffice. 
The weights serve three purposes. First, $Z=\mathbb{E}[w]$ is an asymptotically unbiased importance-sampling estimate of the partition function, from which we obtain the free energy $F(T)=-k_{\mathrm B}T\log Z$ (known as Jarzynski equalities) \cite{jarzynski1997nonequilibrium,vaikuntanathan2008escorted,du2026feat}. Second, for a batch of $M$ generated trajectories $\{\mathbf{x}^m_{0:K}\}_{m=1}^{M}$ with weights $w_m=w(\mathbf{x}^m_{0:K})$, self-normalized reweighting with $\bar w_m=w_m/\sum_{m'}w_{m'}$ 
turns any terminal observable into the reweighted ensemble average $\sum_m \bar w_m\,O(\mathbf{x}^m_K)$ that remains asymptotically unbiased under model imperfection \cite{noe2019boltzmann}. Third, combined with the entropy relation $S=(\langle U\rangle-F)/T$ and temperature amortization, a single model yields free energy and entropy curves $F(T)$, $S(T)$.
The direct path weight estimate of $Z(T)$ is most accurate at high temperatures, where the model closely matches the easily sampled liquid, but its variance grows toward low temperatures, where the sampler is imperfect, and the path weight becomes heavy-tailed. We therefore anchor the absolute free energy once at a high reference temperature, where the path weight estimate is reliable, and propagate it to colder temperatures with the multistate Bennett acceptance ratio (MBAR) \cite{shirts2008statistically}, where the temperature-conditioned model supplies approximate equilibrium samples on a set of intermediate temperatures. MBAR then combines these samples into free energy differences down to the coldest target, bypassing a direct low-temperature path weight estimate whose effective sample size collapses in the glass (see SI~\ref{si:freeenergy}). Beyond evaluation, the path weights \eqref{eq:m-logw} can also be used for the training to correct for the relative masses of basins separated by large energy barriers, to which the force-based loss is blind; we detail this importance-weighted training scheme in SI~\ref{SI: IW training}.

\subsection*{Fine-tuning and inference-time steering to tilted ensembles}
It is often desirable to sample not the Boltzmann ensemble itself but a tilted version of it, $p_\eta(\mathbf{x})\propto p_{\mathrm{target}}(\mathbf{x})\,e^{\eta r(\mathbf{x})}$, for instance, to concentrate the ensemble on configurations with a prescribed short-range order or a desired property. Equivalently, $p_\eta$ is a Boltzmann distribution on the biased landscape $U_\eta=U-\eta r/\beta$. We consider two routes for generating approximate samples from $p_\eta$: fine-tuning the pretrained model toward the tilted target, and steering the unmodified model at inference time.

For fine-tuning, note that when the reward is differentiable, the score of the tilted target, $\beta\mathbf{F}+\eta\nabla_{\mathbf{x}} r$, is available point-wise, so the fixed-point scheme applies verbatim with $p_\eta$ in place of $p_{\mathrm{target}}$: we simply rerun training, initialized from the pretrained checkpoint. This route is natural when a reusable specialized generator is desired.

As an alternative, ATLAS can be steered at inference time via Radon--Nikodym estimator (RNE) based on the path weight \cite{he2025rne}: A population of weighted particles is initialized from the prior and propagated by a proposal SDE with drift $a(\mathbf{x},t)$ and transition density $\overrightarrow{q}_k(\mathbf{x}_{k+1}\,|\,\mathbf{x}_k)=\mathcal{N}_{\mathbb{T}}\big(\mathbf{x}_{k+1};\;\mathbf{x}_k+a(\mathbf{x}_k,t_k)\,\Delta t_k,\ \sigma^2\Delta t_k\,I\big)$, accompanied by an intermediate reward $r_t$ that satisfies the boundary conditions $r_0\equiv0$ and $r_1=\eta r$; the first condition makes the initialization from the prior exact, the second makes the final target the desired tilt. Within these constraints, both $a$ and $r_t$ can be chosen freely without biasing the method---the choices affect only the variance of the weights, and hence the efficiency. After each step $\mathbf{x}_k\to\mathbf{x}_{k+1}$ of the proposal, the weight of a particle is updated as 
\begin{align}
    \log w(\mathbf{x}_{0:k+1})
    =\log w(\mathbf{x}_{0:k})
    +{r_{t_{k+1}}(\mathbf{x}_{k+1})-r_{t_k}(\mathbf{x}_k)}
    +{\log \frac{\overrightarrow{p}_k(\mathbf{x}_{k+1}\,|\,\mathbf{x}_k)}{\overrightarrow{q}_k(\mathbf{x}_{k+1}\,|\,\mathbf{x}_k)}},
    \label{eq:m-smc}
\end{align}
starting from $w\equiv1$. Equation \eqref{eq:m-smc} follows from the same construction as the path weight \eqref{eq:m-logw}: tilting the terminal target and replacing the learned forward kernels by the proposal leaves the backward kernels of the construction free, and choosing them as the time reversal of the learned chain cancels all intractable terms, so that only the reward increments and the log-ratio of forward transition densities survive (see SI~\ref{si:rne-derivation}).  Particles are resampled whenever the effective sample size drops below a threshold. A typical choice for the proposal is the pretrained drift plus a guidance term, $a=b_\theta+\tfrac{\sigma^2}{2}s_\theta+\lambda(t)\,\nabla_{\mathbf{x}} r_t$, when $r_t$ is differentiable; the reward itself, however, need not be, as its gradients enter only through this optional guidance. With $\lambda=0$ the proposal is the pretrained sampler and the scheme reduces to pure reweighting, while $\lambda(t)=\sigma(t)^2/2$ yields classifier-style guidance.

For the intermediate reward we use different constructions: When $r$ is meaningful on noisy intermediate configurations, the annealed reward $r_t=\alpha(t)\,\eta\, r(\mathbf{x}_t)$ is simple and robust. Many material properties, however, are defined only for clean structures (e.g., surrogate-predicted moduli), so the reward exists only at $t=1$. In that case we exploit the two learned heads to form the posterior-mean (Tweedie) estimate \cite{robbins1992empirical,efron2011tweedie} of the final structure implied by an intermediate state, $\hat{\mathbf{x}}_1(\mathbf{x},t)=\mathbb{E}[\mathbf{x}_1\,|\,\mathbf{x}_t=\mathbf{x}]$, which is available in closed form from $b_\theta$ and $s_\theta$ (SI), and set $r_t(\mathbf{x})=\alpha(t)\,\eta\,r\big(\hat{\mathbf{x}}_1(\mathbf{x},t)\big)$, where the annealing factor suppresses the estimate at small $t$, where it is least reliable. A comparison of intermediate-reward choices and further implementation details are provided in SI~\ref{si:rne}.

\subsection*{Inference-time steering for umbrella sampling and potential of mean force estimation}
Beyond thermodynamics, one is frequently interested in the kinetics or free energy along a chosen reaction coordinate (collective variable) $\xi(\mathbf{x})$, i.e., the potential of mean force (PMF) $F(\xi)=-k_{\mathrm B}T\log\rho(\xi)$ of the induced marginal density $\rho(\xi)=\int p_{\mathrm{target}}(\mathbf{x})\,\delta(\xi(\mathbf{x})-\xi)\,\mathrm{d}\mathbf{x}$. Because $\rho$ typically varies by many orders of magnitude across a barrier, it is poorly resolved by direct sampling. Umbrella sampling \cite{torrie1977nonphysical} circumvents this by stratifying the coordinate into $K$ overlapping windows, each confined near a target value $\xi_k$ by a harmonic bias $u_k(\xi)=\tfrac12\kappa_k(\xi-\xi_k)^2$; every biased ensemble is sampled separately and $\rho$ is reconstructed by reweighting and stitching the window histograms. The key observation is that each biased window is itself a reward-tilted ensemble, $p_k\propto p_{\mathrm{target}}\,e^{-\beta u_k}$, i.e., the construction above with reward $\eta r=-\beta u_k$. We therefore realize the windows by steering the pretrained model at inference time through Eq.~\eqref{eq:m-smc}, with no additional training and no simulation against the interatomic potential. Moreover, the same accumulated weights that draw each window also return its free energy $\log Z_k$, so the windows combine into an absolute PMF $F(\xi)$ by inverse-variance-weighted reweighting. We demonstrate this capability on the KA target, using the separation of a tagged $A$--$B$ pair as the collective variable; the coordinate, the reward construction, and the comparison against the reference profile are detailed in the SI (Section~\ref{si:km-umbrella}).

\subsection*{Model architecture and size-transfer}
Both fields are emitted as two heads of a single $E(3)$-equivariant graph neural network with a shared trunk, a torus-adapted PaiNN backbone \cite{schutt2021equivariant}. The network is built to respect the exact symmetries of the target density. Translation invariance is enforced by forming every pairwise feature from minimum-image displacements under periodic boundaries and by projecting both outputs onto the zero-centre-of-mass subspace, matching the removed global-translation mode. Rotation and reflection equivariance follows PaiNN's separation into invariant scalar and equivariant vector features. Messages are exchanged on a neighbor graph dynamically rebuilt from the live configuration at every evaluation, which lets the same weights act on the disordered, rearranging structures encountered while transporting the uniform prior. Temperature or composition conditioning enters as additional node embeddings, so a single backbone serves the temperature- and composition-amortized models used throughout.

Because message passing is local, i.e., each atom interacts only with neighbors inside a finite cutoff, the drift acting on an atom depends only on its immediate environment and is defined for any number of atoms. A model trained on moderate supercells can therefore be evaluated directly on larger cells without retraining, provided the density and the local structural correlations are preserved. This allows for deployment beyond the sizes the model has seen during training, as illustrated in Figure \ref{fig2}. Full architectural details are given in the SI (Section~\ref{si:architecture}).

\subsection*{Target systems and interatomic potentials}
We train ATLAS on five target systems spanning a model glass former and four metallic glasses of increasing chemical complexity. All potentials are evaluated under full periodic boundary conditions. To prevent collapsed configurations during generation, we introduce a short-range soft core using linear continuation below $1.5$\,\AA{} for metallic systems and a logarithmic continuation below $0.8\,\sigma_{AA}$ for the model glass. We note that these modifications do not affect the equilibrium region.

The Kob-Andersen (KA) model is an $80{:}20$ binary Lennard--Jones mixture simulated with $N$ particles on the two-dimensional torus. In reduced units, $\epsilon_{AA}=\sigma_{AA}=k_B=1$, and we set the parameters for KA model as $(\epsilon_{AA},\sigma_{AA})=(1.0,1.0)$, $(\epsilon_{AB},\sigma_{AB})=(1.5,0.8)$ and $(\epsilon_{BB},\sigma_{BB})=(0.5,0.88)$.
Each interaction is shifted to zero at $r_{\mathrm{cut}}^{\alpha\beta}=2.5\,\sigma_{\alpha\beta}$. The square-cell size is set by the reduced number density $\rho=1$.

The Cu-Zr binary glass, Cr-Co-Ni medium-entropy glass and Fe-Ni-Cr-Co-Cu high-entropy glass are described using embedded-atom-method potentials documented in the NIST Interatomic Potentials Repository \cite{hale2018evaluating}. Simulations are performed at $T=700$~K for Cu-Zr and $T=500$~K for the remaining metallic glasses. Each system contains $N=128$ atoms in a fixed cubic cell, with a composition-dependent volume determined from a Vegard-style mixture of elemental atomic volumes. Specifically, the most stable crystal of each element $i$ is relaxed using the MACE-MPA-0 foundation MLIP \cite{batatia2022mace,batatia2025foundation} to obtain its equilibrium per-atom volume $V_i$. For composition $\mathbf{c}=\{c_i\}$, with $\sum_i c_i=1$, we set the atomic volume $V_{\mathrm{atom}}(\mathbf{c})$ and the length of cubic supercell $L(\mathbf{c})$ as
\begin{equation}
V_{\mathrm{atom}}(\mathbf{c})=\sum_i c_iV_i,
\qquad
L(\mathbf{c})=\bigl[NV_{\mathrm{atom}}(\mathbf{c})\bigr]^{1/3}.
\label{eq:densrule}
\end{equation}
This linear mixing rule agrees with direct cell-volume relaxation to within approximately 1\% across the tested configurations.

The eight-element Fe-Mn-Ni-Ti-Cu-Cr-Co-Al system follows the same setup, but its target energy is evaluated directly using the $\sim9$-million-parameter MACE-MPA-0-medium potential. This system tests whether ATLAS can remain efficient when sampling chemically complex alloys with an expensive universal MLIP for which suitable classical potentials may be unavailable.

\subsection*{Evaluation of target properties}
The Warren-Cowley short range order parameters are evaluated according to Eq.\,\ref{SRO}, where the cutoff radius is defined by the first minima of the pair distribution function $g(r)$ of the corresponding paired species.
The mechanical objectives are evaluated with the EAM potentials for CuZr and CrCoNi, and with the MACE-MPA-0 MLIP for the high-entropy metallic glasses.
For property evaluation, each generated amorphous structure is first fully relaxed. We then obtain the bulk modulus $B$ from an isotropic Birch-Murnaghan equation-of-state fit and the shear modulus $G$ from orthorhombic and monoclinic volume-conserving deformations, combined using the Voigt-Reuss-Hill average. Structures that fail the corresponding Born stability criteria are excluded. The Pugh ratio is subsequently calculated as $B/G$. 

For efficient reward-guided fine-tuning and inference-time tilting of ATLAS with guidance, the target bulk modulus is required to be differentiable with respect to the generated atomic coordinates. 
In our experiments, neural surrogates based on intensive structural descriptors showed negligible predictive power within a fixed composition. We therefore construct a physics-based differentiable surrogate by combining straight-through structural relaxation with a gradient-safe, closed-form affine equation of state. Specifically, each generated configuration $\mathbf{x}$ is first relaxed by gradient descent to its inherent structure $\mathbf{x}^\star$, after which the bulk modulus $B(\mathbf{x}^\star)$ is evaluated from the closed-form equation of state. 
To avoid differentiating through the iterative relaxation trajectory, we use the straight-through approximation $\mathbf{x}^\star=\mathbf{x}+\texttt{stopgrad}\left[\texttt{relax}(\mathbf{x})-\mathbf{x}\right]$, such that $\partial\mathbf{x}^\star/\partial\mathbf{x}=\mathbb{I}$ in the backward pass and gradients are propagated only through the equation-of-state evaluation. 
The resulting reward gradient, $\nabla_{\mathbf{x}}B\approx\nabla_{\mathbf{x}^\star}B$, is therefore an approximate surrogate gradient. This approximation appears sufficient for reward-guided sampling because the guidance requires a direction that increases the target property, rather than the exact total derivative, as demonstrated in Fig.\,\ref{fig3}d.

\subsection*{LLM evolutionary algorithm for active learning}
The main goal of the evolutionary algorithm is to identify optimal compositions with respect to a given reward/objective function $r$ that evaluates on multiple ensemble average properties:
\begin{equation}
\mathbf{x}^* \in \arg\max_{\mathbf{x}} r(\mathbf{x}).
\end{equation}  
The LLM-driven evolutionary algorithm (LLM-EA) acts as an outer-loop controller, using property evaluations from ATLAS to propose new compositions and optimize target properties for materials inverse design. The language model acts as the proposal (evolutionary operator) of an evolutionary workflow operating directly on the population of previously evaluated compositions \cite{wang2025efficient,novikov2025alphaevolve}. Whenever computational resources become available, the controller determines the number of free GPU slots and fills them sequentially, issuing one composition proposal per LLM call.

For each proposal, the controller assembles an in-context set of 10 evaluated compositions that balances exploitation and exploration. In the single-objective setting, e.g. optimizing a positive scalar $y$, this set contains deterministic anchors corresponding to the best-performing compositions, together with stochastic examples sampled from the remaining population according to $\operatorname{softmax}(\tilde{y}/\tau)$, where $\tilde{y}=(y-y_\text{min})/(y_\text{max}-y_\text{min})$ is the min-max-normalized objective and $\tau=0.3$. 
This sampling encourages the LLM to explore a different high-performing subset of the population on each call while retaining the current optima as fixed references. 
The LLM prompt includes the measured compositions and properties, the best result obtained so far, and relevant materials priors, such as approximate elemental moduli. It is also explicitly prompted to avoid duplicate jobs and encourage diversity among concurrently evaluated candidates. We defer the algorithm details to SI \ref{si:llmea}.

For joint optimization of the multiple properties, such as bulk modulus $B$, shear modulus $G$, and Pugh ratio $B/G$ in this work, we scalarize multiple objective values with randomly sampled weights to capture different regions of the Pareto frontier \cite{knowles2006parego}. 
On each proposal call, we uniformly sample non-negative weights, constrained to sum to one, and assign to the objectives. These weights define a temporary weighted combination of the min-max-normalized objectives, which is used only to rank and select the in-context examples.
We note that this random scalarization during sampling is used only to select the in-context examples, and the LLM is always provided with the individual values of $B$, $G$, and $B/G$. 
At each LLM call, the in-context set contains three deterministic anchors: the compositions with the highest measured $B$, $G$, and $B/G$, respectively. The remaining seven examples are sampled using the stochastic scalarization described above. Of these, four are sampled exclusively from the current Pareto frontier, whereas the remaining three are sampled from the dominated population. In both cases, the sampling probabilities are determined by the randomly scalarized scores.
Redrawing the weight vector for every call changes the emphasized trade-off direction and promotes coverage of the full Pareto frontier, while the inclusion of dominated examples provides information about lower-performing regions of composition space. 

Each LLM response is returned as structured JSON and validated against the composition constraints, including non-negativity, unit sum and, for the eight-element search space, support on at most five elements. If more than five elements are proposed, the lowest-concentration components are removed and their fractions redistributed among the retained elements. 
Other invalid or duplicate proposals are rejected and re-requested rather than silently projected onto the feasible domain. The campaign is initialized with a candidate pool sampled uniformly in the composition space, and subsequently proceeds online and asynchronously. 
As simulation jobs finish, their evaluated properties are immediately added to the measured population, and newly available GPU slots are refilled using the updated results, without waiting for the remaining jobs in the current batch to complete. 
In-flight compositions and proposals already issued during the same refill cycle are included in the avoid-list, preventing duplicate evaluations while allowing the controller to manage multiple concurrent ATLAS jobs. 
Acquisition requests are routed through an ordered fallback chain of model deployments so that failure of an individual deployment does not interrupt the optimization loop.

\section*{Acknowledgments}
The authors thank Jutta Rogal for helpful discussions. 

\clearpage
\begingroup
\sffamily
\raggedright
\setlength{\parindent}{0pt}

\vspace*{-24pt}

{\bfseries\fontsize{18}{22}\selectfont
ATLAS: A Foundation Neural Sampler for Amorphous Materials: Supplementary Information
\par}
\vspace{9pt}
{\bfseries\selectfont
Mouyang Cheng$^{1,2,3,*,\#}$, Denis Blessing$^{4,*}$, Botao Yu$^{1,5,\#}$, Gerhard Neumann$^4$,
Mingda Li$^{2,6}$, Carles Domingo-Enrich$^{1,\dagger}$, and Yuanqi Du$^{1,\dagger}$
\par}
\vspace{12pt}
{\fontsize{9.5}{11.5}\selectfont
$^{1}$Microsoft Research New England, Cambridge, MA 02142, USA\\
$^{2}$Center for Computational Science and Engineering, MIT, Cambridge, MA 02139, USA\\
$^{3}$Department of Materials Science and Engineering, MIT, Cambridge, MA 02139, USA\\
$^{4}$Karlsruhe Institute of Technology, 76131 Karlsruhe, Germany\\
$^{5}$Department of Computer Science and Engineering, OSU, Columbus, OH 43210, USA\\
$^{6}$Department of Nuclear Science and Engineering, MIT, Cambridge, MA 02139, USA\\
$^{*}$These authors contributed equally.\\
$^\#$Work done during internship at Microsoft Research.\\
$\dagger$ Correspondence. Email: carlesd@microsoft.com, yuanqidu@microsoft.com
\par}
\endgroup
\tableofcontents
\setcounter{figure}{0}
\setcounter{table}{0}
\setcounter{equation}{0}
\renewcommand{\thefigure}{S\arabic{figure}}
\renewcommand{\thetable}{S\arabic{table}}
\renewcommand{\theequation}{S\arabic{equation}}
\renewcommand{\theHfigure}{Supp\arabic{figure}}
\renewcommand{\theHtable}{Supp\arabic{table}}
\renewcommand{\theHequation}{Supp\arabic{equation}}

\newcommand{\enghist}[3]{%
  \begin{scope}[shift={(#1,#2)}]
    \pgfmathsetmacro{\sig}{0.28-0.18*(#3)}%
    \pgfmathsetmacro{\amp}{0.20+0.28*(#3)}%
    \pgfmathtruncatemacro{\mixp}{100-100*(#3)}%
    \draw[rounded corners=1pt,draw=black!45,line width=0.4pt,fill=white] (-0.40,-0.32) rectangle (0.40,0.32);
    \fill[red!\mixp!blue,opacity=0.20]
      plot[domain=-0.34:0.34,samples=34,smooth] (\x,{-0.24+\amp*exp(-(\x*\x)/(2*\sig*\sig))})
      -- (0.34,-0.24) -- (-0.34,-0.24) -- cycle;
    \draw[red!\mixp!blue,line width=0.8pt]
      plot[domain=-0.34:0.34,samples=34,smooth] (\x,{-0.24+\amp*exp(-(\x*\x)/(2*\sig*\sig))});
  \end{scope}%
}
\newtheorem{proposition}{Proposition}
\newcommand{\daggerfootnote}{\textsuperscript{\textdagger}}

\section{Method details}
Throughout this section we use the conventions of the main-text: generation time runs over $t\in[0,1]$, the prior is the uniform distribution $p_0=\mathrm{Unif}(\mathbb{T}^D)$ on the flat torus $\mathbb{T}^D=(\mathbb{R}/\mathbb{Z})^D$ with $D=dN$, the wrapping map is $\mathrm{wrap}(y)=y-\lfloor y+\tfrac12\rfloor$ (applied componentwise), and $\beta=1/(k_{\mathrm B}T)$. The interpolant, fields, forward/backward SDEs and score target are Eqs.~\eqref{eq:m-interpolant}--\eqref{eq:m-tsi} of the main text. Both $\mathbb{E}[\cdot]$ and $\langle\cdot\rangle$ denote expectations: we reserve $\langle\cdot\rangle$ for thermodynamic averages of physical observables, such as the mean energy $\langle U\rangle$, and write $\mathbb{E}[\cdot]$ for all other expectations.

\subsection{Training algorithm with conditional amortization}
\label{si:algorithm}
Algorithm~\ref{alg:train} states the full training loop, including the conditioning variable $\mathbf{c}$ (main-text Methods). The loop alternates between three operations: \emph{generate} terminal configurations with the current model, \emph{label} them with forces from the interatomic potential, and \emph{regress} the two heads against force-informed targets built from freshly drawn interpolant points. Because the buffer $\mathcal{B}$ stores the (potentially expensive) force evaluations in the label step from the gradient steps in the regress step, each labelled configuration is reused across many updates, making the algorithm highly efficient. When the condition encodes temperature, $\beta=\beta(\mathbf{c})$ enters the score target through the rescaled force $\beta\mathbf{F}$; when it encodes composition at fixed temperature, $\beta$ is constant and $\mathbf{c}$ enters through the species-dependent potential $U(\cdot\,;\mathbf{c})$.

\begin{algorithm}[t]
\caption{\name{} training with conditional amortization}
\label{alg:train}
\begin{algorithmic}[1]
\State \textbf{input:} energy/force $U(\cdot\,;\mathbf{c}),\ \mathbf{F}(\cdot\,;\mathbf{c})=-\nabla U(\cdot\,;\mathbf{c})$; condition prior $p(\mathbf{c})$; schedules $\alpha(t),\gamma(t),\sigma(t)$; prior $p_0=\mathrm{Unif}(\mathbb{T}^D)$
\State initialize heads $(b_\theta,s_\theta)$; replay buffer $\mathcal{B}\gets\emptyset$
\For{a fixed number of outer steps}
  \State \textbf{// generate}
  \State draw $\mathbf{c}\sim p(\mathbf{c})$ and $\mathbf{x}_0\sim p_0$
  \State integrate the forward SDE~\eqref{eq:m-fwd} with drift $b_\theta+\tfrac{\sigma^2}{2}s_\theta$ under condition $\mathbf{c}$ to obtain $\mathbf{x}_1\sim p_{\mathrm{model}}(\cdot\mid\mathbf{c})$
  \State \textbf{// label}
  \State evaluate forces $\mathbf{F}(\mathbf{x}_1;\mathbf{c})$; push $(\mathbf{F}(\mathbf{x}_1;\mathbf{c}),\mathbf{x}_1,\mathbf{c})$ onto $\mathcal{B}$
  \For{a fixed number of inner steps}
    \State draw $(\mathbf{F}(\mathbf{x}_1;\mathbf{c}),\mathbf{x}_1,\mathbf{c})\sim\mathcal{B}$, $\mathbf{x}_0\sim p_0$, $\mathbf{z}\sim\mathcal{N}(\mathbf{0},I)$, $t\sim\mathcal{U}[0,1]$
    $\mathbf{x}_1\sim p_{\mathrm{model}}(\cdot\mid\mathbf{c})$
  \State \textbf{// interpolate}
    \State $\boldsymbol{\Delta}\gets\mathrm{wrap}(\mathbf{x}_1-\mathbf{x}_0)$;\quad $\mathbf{x}_t\gets\mathrm{wrap}\!\big(\mathbf{x}_0+\alpha(t)\boldsymbol{\Delta}+\gamma(t)\mathbf{z}\big)$
    \State $\hat{\mathbf{s}}\gets\beta(\mathbf{c})\,\mathbf{F}(\mathbf{x}_1;\mathbf{c})$;\quad $\hat{\mathbf{b}}\gets\dot\alpha(t)\boldsymbol{\Delta}-\dot\gamma(t)\gamma(t)\,\hat{\mathbf{s}}$
  \State \textbf{// regress}
    \State $\ell(\theta)\gets\big\|s_\theta(\mathbf{x}_t,t,\mathbf{c})-\hat{\mathbf{s}}\big\|^2+\big\|b_\theta(\mathbf{x}_t,t,\mathbf{c})-\hat{\mathbf{b}}\big\|^2$
    \State take a gradient step on $\ell(\theta)$
  \EndFor
\EndFor
\State \textbf{return} $(b_\theta,s_\theta)$
\end{algorithmic}
\end{algorithm}

\subsection{The target score identity on the torus}
\label{si:tsi}
We prove the identity~\eqref{eq:m-tsi} used to train \name{}: under a uniform prior, the interatomic force at the terminal configuration is an unbiased estimator of the interpolant score.

\begin{proposition}[Target score identity]
\label{prop:tsi}
Let $p_0=\mathrm{Unif}(\mathbb{T}^D)$, let $p_{\mathrm{target}}$ be $C^1$ and strictly positive on $\mathbb{T}^D$, and let $\mathbf{x}_t$ be the interpolant~\eqref{eq:m-interpolant} with $\mathbf{x}_1\sim p_{\mathrm{target}}$. Then for every $t\in(0,1)$ and every $\mathbf{x}$,
\begin{equation}
    s(\mathbf{x},t)=\nabla_{\mathbf{x}}\log p_t(\mathbf{x})=\mathbb{E}\big[\beta\mathbf{F}(\mathbf{x}_1)\,\big|\,\mathbf{x}_t=\mathbf{x}\big].
    \label{eq:s-tsi}
\end{equation}
No smoothness of the interpolant kernel is required; in particular the identity holds for the deterministic interpolant $\gamma\equiv0$.
\end{proposition}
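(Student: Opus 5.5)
The plan is to exploit the group structure of the torus: the uniform prior makes the interpolant a \emph{convolution} of $p_{\mathrm{target}}$ with a noise law that is independent of $\mathbf{x}_1$. The score of a convolution is the posterior average of the score of the convolved density, and that is exactly \eqref{eq:s-tsi}.

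\textbf{Step 1 (reparametrize around the endpoint).} On $\mathbb{T}^D$ we have $\mathbf{x}_0+\boldsymbol{\Delta}=\mathbf{x}_1$ modulo $1$, since $\boldsymbol{\Delta}=\mathrm{wrap}(\mathbf{x}_1-\mathbf{x}_0)$. Hence
\begin{equation*}
\mathbf{x}_t=\mathbf{x}_1+\mathbf{u}_t \pmod 1,\qquad \mathbf{u}_t:=-(1-\alpha(t))\,\boldsymbol{\Delta}+\gamma(t)\,\mathbf{z}.
\end{equation*}

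\textbf{Step 2 (independence).} I claim that $\boldsymbol{\Delta}$ is uniform on $[-\tfrac12,\tfrac12)^D$ and independent of $\mathbf{x}_1$. Indeed, conditionally on $\mathbf{x}_1$, the map $\mathbf{x}_0\mapsto\mathbf{x}_1-\mathbf{x}_0$ is a measure-preserving bijection of $\mathbb{T}^D$, and Haar measure is translation invariant. Since $\mathbf{z}$ is independent of both, $\mathbf{u}_t$ has a law $\mu_t$ that does not depend on $\mathbf{x}_1$. We do not need $\mu_t$ to have a density; for $\gamma\equiv0$ it is, for instance, a scaled uniform law on a subcube, which may also be singular in degenerate cases. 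This independence is the only point where the uniform prior enters, and it explains why no time-dependent prefactor appears.

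\textbf{Step 3 (differentiate the convolution).} By Steps 1 and 2,
\begin{equation*}
p_t(\mathbf{x})=\int p_{\mathrm{target}}(\mathbf{x}-\mathbf{u})\,\mu_t(\mathrm{d}\mathbf{u}).
\end{equation*}
Because $p_{\mathrm{target}}$ is $C^1$ on the compact torus, $\nabla p_{\mathrm{target}}$ is bounded. Dominated convergence therefore allows differentiation under the integral:
\begin{equation*}
\nabla p_t(\mathbf{x})=\int p_{\mathrm{target}}(\mathbf{x}-\mathbf{u})\,\nabla\log p_{\mathrm{target}}(\mathbf{x}-\mathbf{u})\,\mu_t(\mathrm{d}\mathbf{u}).
\end{equation*}
Strict positivity of $p_{\mathrm{target}}$ ensures $p_t>0$ everywhere, so the logarithm is legitimate. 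Dividing by $p_t(\mathbf{x})$ and using $\nabla\log p_{\mathrm{target}}=\beta\mathbf{F}$ gives
\begin{equation*}
s(\mathbf{x},t)=\int \beta\mathbf{F}(\mathbf{x}-\mathbf{u})\,\pi_{\mathbf{x}}(\mathrm{d}\mathbf{u}),\qquad \pi_{\mathbf{x}}(\mathrm{d}\mathbf{u}):=\frac{p_{\mathrm{target}}(\mathbf{x}-\mathbf{u})\,\mu_t(\mathrm{d}\mathbf{u})}{p_t(\mathbf{x})}.
\end{equation*}

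\textbf{Step 4 (identify the posterior).} This is the step I expect to be the main obstacle, because no Lebesgue density for $\mathbf{u}_t$ is available. I will show that $\mathbf{x}\mapsto\pi_{\mathbf{x}}$, pushed forward by $\mathbf{u}\mapsto\mathbf{x}-\mathbf{u}$, is a regular conditional distribution of $\mathbf{x}_1$ given $\mathbf{x}_t=\mathbf{x}$. For bounded measurable $f$ and $g$, Fubini together with the torus-invariant change of variables $\mathbf{y}=\mathbf{x}_1+\mathbf{u}$ gives
\begin{equation*}
\mathbb{E}[f(\mathbf{x}_1)g(\mathbf{x}_t)]=\int g(\mathbf{y})\int f(\mathbf{y}-\mathbf{u})\,p_{\mathrm{target}}(\mathbf{y}-\mathbf{u})\,\mu_t(\mathrm{d}\mathbf{u})\,\mathrm{d}\mathbf{y}.
\end{equation*}
This is the defining property, and the identity then holds for every $\mathbf{x}$ with this canonical, continuous version.

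Finally, I will remark that the zero-center-of-mass restriction is handled by running the same argument on the quotient subtorus. Uniform measure and translation invariance are inherited there, and $\mathbf{F}$ is tangent to the subtorus because $U$ is translation invariant.
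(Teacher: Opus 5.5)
Your proof is correct and follows the same route as the paper. You rewrite $\mathbf{x}_t$ as $\mathbf{x}_1$ plus a displacement that the uniform prior makes independent of $\mathbf{x}_1$, write $p_t$ as a circular convolution so the gradient falls on $p_{\mathrm{target}}$ (justified by dominated convergence), and identify the result as the posterior average of $\beta\mathbf{F}(\mathbf{x}_1)$. Where the paper uses a kernel density $\eta_t$ and Bayes' rule, you use a general displacement law $\mu_t$ and check the posterior by Fubini; this is a slightly more careful version of the same argument, not a different approach. It also covers degenerate schedules with $\alpha(t)=1$ before $t=1$ and fixes the continuous version needed for ``every $\mathbf{x}$''.
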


\begin{proof}
Since $\boldsymbol{\Delta}=\mathrm{wrap}(\mathbf{x}_1-\mathbf{x}_0)\equiv\mathbf{x}_1-\mathbf{x}_0\ (\mathrm{mod}\ 1)$, the bridge mean satisfies $\mathbf{x}_0+\alpha(t)\boldsymbol{\Delta}\equiv\mathbf{x}_1-(1-\alpha(t))\boldsymbol{\Delta}\ (\mathrm{mod}\ 1)$, so we may write the interpolant from the target endpoint,
\begin{equation}
    \mathbf{x}_t=\mathrm{wrap}(\mathbf{x}_1+\boldsymbol{\xi}_t),
    \qquad
    \boldsymbol{\xi}_t=-(1-\alpha(t))\boldsymbol{\Delta}+\gamma(t)\mathbf{z}.
    \label{eq:s-xi}
\end{equation}
Because $\mathbf{x}_0$ is uniform on the torus, $\boldsymbol{\Delta}=\mathrm{wrap}(\mathbf{x}_1-\mathbf{x}_0)$ is uniform and independent of $\mathbf{x}_1$; hence $\boldsymbol{\xi}_t$ is independent of $\mathbf{x}_1$, and the conditional law of $\mathbf{x}_t$ given $\mathbf{x}_1$ is a fixed translation kernel $\eta_t(\cdot-\mathbf{x}_1)$, where $\eta_t$ denotes the density of $\boldsymbol{\xi}_t$ on $\mathbb{T}^D$. The marginal is therefore the circular convolution
\begin{equation}
    p_t(\mathbf{x})=\int_{\mathbb{T}^D}p_{\mathrm{target}}(\mathbf{y})\,\eta_t(\mathbf{x}-\mathbf{y})\,d\mathbf{y},
    \label{eq:s-conv}
\end{equation}
where $\eta_t$ is a probability density on the torus: smooth and strictly positive when $\gamma(t)>0$, and, for the deterministic interpolant $\gamma\equiv0$, the uniform density on the wrapped cube of side length $1-\alpha(t)$. In either case $p_t=p_{\mathrm{target}}\ast\eta_t$ inherits the regularity and strict positivity of $p_{\mathrm{target}}$, with $p_t\geq\min p_{\mathrm{target}}>0$. Substituting $\mathbf{u}=\mathbf{x}-\mathbf{y}$ in~\eqref{eq:s-conv} and differentiating under the integral---the gradient falls on the smooth endpoint density, never on the kernel---
\begin{equation}
    \nabla_{\mathbf{x}}p_t(\mathbf{x})
    =\int_{\mathbb{T}^D}\nabla p_{\mathrm{target}}(\mathbf{x}-\mathbf{u})\,\eta_t(\mathbf{u})\,d\mathbf{u}
    =\int_{\mathbb{T}^D}p_{\mathrm{target}}(\mathbf{y})\,\beta\mathbf{F}(\mathbf{y})\,\eta_t(\mathbf{x}-\mathbf{y})\,d\mathbf{y},
    \label{eq:s-gradconv}
\end{equation}
using $\nabla p_{\mathrm{target}}=p_{\mathrm{target}}\nabla\log p_{\mathrm{target}}=p_{\mathrm{target}}\,\beta\mathbf{F}$; differentiation under the integral is justified by dominated convergence, since $\nabla p_{\mathrm{target}}$ is bounded on the compact torus and $\eta_t$ is a probability density. By Bayes' rule the posterior density of $\mathbf{x}_1=\mathbf{y}$ given $\mathbf{x}_t=\mathbf{x}$ is $p_{\mathrm{target}}(\mathbf{y})\,\eta_t(\mathbf{x}-\mathbf{y})/p_t(\mathbf{x})$, so dividing~\eqref{eq:s-gradconv} by $p_t(\mathbf{x})$ turns the right-hand side into the posterior average of $\beta\mathbf{F}(\mathbf{x}_1)$, which is~\eqref{eq:s-tsi}.
\end{proof}

\subsection{Uniqueness of the fixed-point}
\label{si: uniqueness}
We now justify the claim made below Eq.~\eqref{eq:m-loss}, i.e., $p_{\mathrm{target}}$ is the \emph{only} distribution left invariant by the fixed-point round of Fig.~\ref{fig:m-fixedpoint}, and we identify the sharp condition on the schedules under which this holds. We work in the idealized regime in which each head attains its regression optimum (unbounded capacity and exact expectations). Throughout, all densities are of class $C^2$ and strictly positive on $\mathbb{T}^D$; $\alpha$ is $C^1$ and strictly increasing with $\alpha(0)=0$, $\alpha(1)=1$; $\gamma$ is $C^1$ with $\gamma(0)=\gamma(1)=0$; and $\sigma$ is continuous. Under these assumptions the identities of Section~\ref{si:tsi} apply; as shown there, they require no smoothness of the interpolant kernel and hence cover the deterministic interpolant $\gamma\equiv0$ used in our experiments.

\paragraph{The training round as a map on distributions.}
Write $q=p_{\mathrm{model}}$ for the model's terminal law at the start of a round. The round generates endpoints $\mathbf{x}_1\sim q$, labels them with the rescaled force $\beta\mathbf{F}(\mathbf{x}_1)=\nabla\log p_{\mathrm{target}}(\mathbf{x}_1)$, and minimizes~\eqref{eq:m-loss} over interpolants~\eqref{eq:m-interpolant} built toward these endpoints. Because a least-squares fit returns the conditional expectation of its target, the two heads become
\begin{equation}
    s'(\mathbf{x},t)=\mathbb{E}_q\big[\nabla\log p_{\mathrm{target}}(\mathbf{x}_1)\,\big|\,\mathbf{x}_t=\mathbf{x}\big],
    \qquad
    b'(\mathbf{x},t)=\mathbb{E}_q\big[\dot\alpha(t)\boldsymbol{\Delta}-\dot\gamma(t)\gamma(t)\,\nabla\log p_{\mathrm{target}}(\mathbf{x}_1)\,\big|\,\mathbf{x}_t=\mathbf{x}\big],
    \label{eq:s-round}
\end{equation}
where the subscript $q$ records that the interpolant endpoint is drawn from the current model. The round then \emph{generates} with these heads: the next buffer is produced by integrating the forward SDE with drift $v'=b'+\tfrac{\sigma^2}{2}s'$ from the uniform prior; we write $\rho_t$ for its marginals, so $\rho_1$ is the next endpoint law.

Two exact identities organize the analysis. First, the regressed velocity in~\eqref{eq:s-round} and the true velocity $b^q(\mathbf{x},t)=\mathbb{E}_q[\dot\alpha(t)\boldsymbol{\Delta}+\dot\gamma(t)\mathbf{z}\mid\mathbf{x}_t=\mathbf{x}]$ of Eq.~\eqref{eq:m-bs} share the term $\dot\alpha(t)\,\mathbb{E}_q[\boldsymbol{\Delta}\mid\mathbf{x}_t]$ and differ only through the score, $b'-b^q=-\dot\gamma(t)\gamma(t)\,(s'-s^q)$ (using $\mathbb{E}_q[\mathbf{z}\mid\mathbf{x}_t]=-\gamma(t)\,s^q$ from~\eqref{eq:m-bs}), where $s^q(\mathbf{x},t)=\nabla\log p^q_t(\mathbf{x})$ is the score of the interpolant marginal $p^q_t=q\ast\eta_t$ (the convolution~\eqref{eq:s-conv} with $p_{\mathrm{target}}$ replaced by $q$). For the forward sampling drifts $v'=b'+\tfrac{\sigma^2}{2}s'$ and $v^q=b^q+\tfrac{\sigma^2}{2}s^q$ this gives
\begin{equation}
    v'(\mathbf{x},t)-v^{q}(\mathbf{x},t)=c(t)\,\big(s'(\mathbf{x},t)-s^{q}(\mathbf{x},t)\big),
    \qquad
    c(t):=\tfrac{1}{2}\sigma(t)^2-\dot\gamma(t)\gamma(t),
    \label{eq:s-drift}
\end{equation}
so the force labels influence generation \emph{only} through the score-consumption coefficient $c$. Second, applying the target-score identity (Proposition~\ref{prop:tsi}) with the endpoint law taken to be $q$ writes the true score of $q$ as the posterior average of its \emph{own} force,
\begin{equation}
    s^q(\mathbf{x},t)=\mathbb{E}_q\big[\nabla\log q(\mathbf{x}_1)\,\big|\,\mathbf{x}_t=\mathbf{x}\big],
    \label{eq:s-selfscore}
\end{equation}
so that, with $\phi:=\log\big(p_{\mathrm{target}}/q\big)$ and the Bayes representation of the endpoint posterior (Section~\ref{si:tsi}),
\begin{equation}
    s'(\mathbf{x},t)-s^{q}(\mathbf{x},t)
    =\mathbb{E}_q\big[\nabla\phi(\mathbf{x}_1)\,\big|\,\mathbf{x}_t=\mathbf{x}\big]
    =\frac{\big[(q\,\nabla\phi)\ast\eta_t\big](\mathbf{x})}{p^q_t(\mathbf{x})}.
    \label{eq:s-error}
\end{equation}
The decisive feature of~\eqref{eq:s-round} is that the \emph{label} transports the target force $\nabla\log p_{\mathrm{target}}$, whereas the conditional average runs over the model's own configurations; the error field~\eqref{eq:s-error} vanishes identically if and only if $\phi$ is constant, i.e.\ $q=p_{\mathrm{target}}$.

We call $q$ a \emph{fixed point of the round} when the sampler reproduces the interpolant marginals it was trained on,
\begin{equation}
    \rho_t=p^q_t \qquad \text{for all } t\in[0,1].
    \label{eq:s-fixdef}
\end{equation}
This is the natural notion of stationarity for the model as a whole: it implies in particular $\rho_1=q$, so the buffer, the regression targets~\eqref{eq:s-round}, and hence the fields are all unchanged by the round and training is stationary; and it is exactly the requirement that the two heads remain, at every time, the velocity and score of the sampler's own marginals---the interpretation on which the corrector and churn samplers (Section~\ref{si:pc}), the Tweedie estimate (Section~\ref{si:tweedie}) and the backward kernels of the path-weight estimator rely.

\begin{figure}[tbp!]
 \centering
 \includegraphics[width=0.95\textwidth]{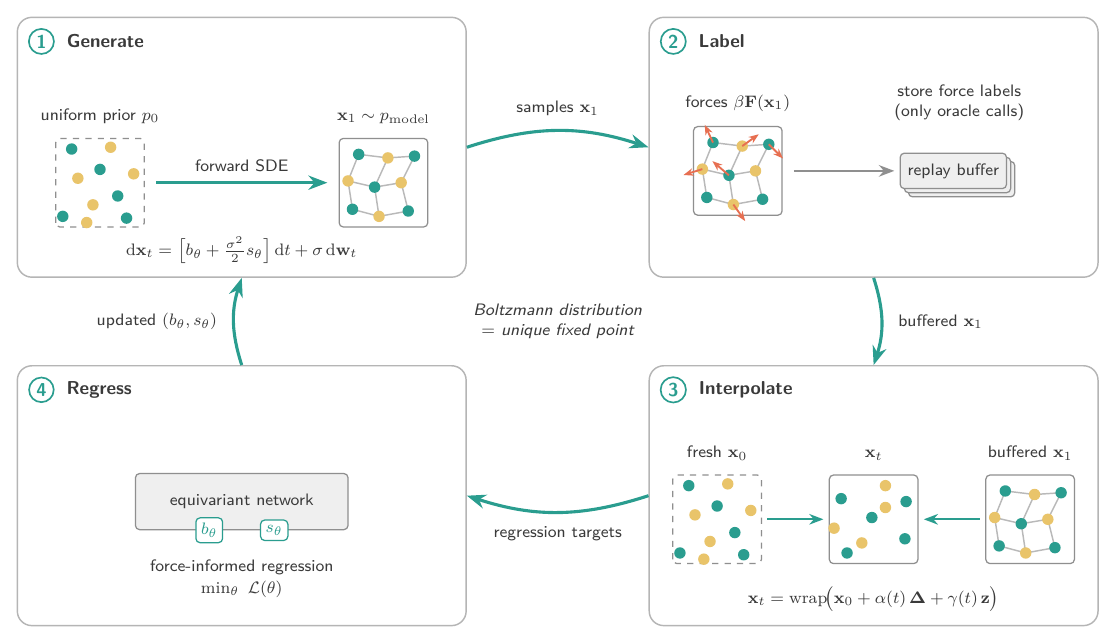}
 \caption{\textbf{Self-consistent fixed-point training of ATLAS.} At each round, the current model generates terminal configurations by integrating the forward SDE from the uniform prior (\emph{generate}); the interatomic potential labels them with forces, which are stored in a replay buffer (\emph{label}); intermediate states $\mathbf{x}_t$ are drawn from the interpolant connecting fresh prior samples to buffered configurations (\emph{interpolate}); and the two network heads $(b_\theta,s_\theta)$ are updated by force-informed regression (\emph{regress}). The Boltzmann distribution is the unique distribution for which this cycle is stationary.}
 \label{fig:m-fixedpoint}
\end{figure}

\begin{proposition}[Unique fixed point]
\label{prop:uniqueness}
Under the standing assumptions, with $c$ as in~\eqref{eq:s-drift}:
\begin{enumerate}
    \item[(i)] If $c\not\equiv0$ on $(0,1)$, then $q$ is a fixed point of the round if and only if $q=p_{\mathrm{target}}$.
    \item[(ii)] If $c\equiv0$ on $(0,1)$, then \emph{every} admissible $q$ is a fixed point: the iteration is degenerate and the force labels never influence generation.
\end{enumerate}
\end{proposition}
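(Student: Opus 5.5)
The plan is to compare the sampler's marginals $\rho_t$ with the interpolant marginals $p^q_t$ through their Fokker--Planck equations. The true pair $(b^q,s^q)$ realizes $p^q_t$ for every noise level, as in the forward SDE \eqref{eq:m-fwd} with $q$ in place of $p_{\mathrm{target}}$. Hence $p^q_t$ solves
\begin{equation*}
\partial_t p^q_t=-\nabla\cdot\big(v^q p^q_t\big)+\tfrac{\sigma^2}{2}\Delta p^q_t ,
\end{equation*}
with $p^q_0=p_0$ uniform. The trained sampler's law $\rho_t$ solves the same equation with $v'$ in place of $v^q$, from the same initial condition. By \eqref{eq:s-drift}, $v'-v^q=c(t)\,(s'-s^q)$.

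Part (ii) is then immediate. If $c\equiv0$, the two drifts coincide, so both marginals solve the same linear Fokker--Planck equation from the same initial datum. Uniqueness of solutions on the compact torus (smooth strictly positive coefficients for $t\in(0,1)$; continuity at the endpoints from the regularity assumptions) gives $\rho_t=p^q_t$ for every admissible $q$.

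For part (i), sufficiency follows from \eqref{eq:s-error}. If $q=p_{\mathrm{target}}$, then $\phi$ is constant, so $s'=s^q$ and the drifts agree; the uniqueness argument above gives $\rho_t=p^q_t$.

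Necessity is the substantive direction. Suppose $\rho_t=p^q_t$ for all $t$. Subtracting the two Fokker--Planck equations gives
\begin{equation*}
\nabla\cdot\big(c(t)\,(s'-s^q)\,p^q_t\big)=0 \quad\text{for all } t .
\end{equation*}
Choose an interval of $t$ on which $c(t)\neq0$, which exists by continuity and $c\not\equiv0$. By \eqref{eq:s-error}, $(s'-s^q)\,p^q_t=(q\nabla\phi)\ast\eta_t$, so the condition becomes
\begin{equation*}
\nabla\cdot\big[(q\nabla\phi)\ast\eta_t\big]=0, \qquad\text{that is,}\qquad \big(\nabla\cdot(q\nabla\phi)\big)\ast\eta_t=0
\end{equation*}
on that interval. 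The goal is to conclude $\nabla\cdot(q\nabla\phi)=0$. Once that holds, multiplying by $\phi$ and integrating over the torus gives $\int q\,|\nabla\phi|^2=0$. Since $q>0$, $\phi$ is constant, and since both $q$ and $p_{\mathrm{target}}$ are normalized, $q=p_{\mathrm{target}}$.

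The main obstacle is removing the convolution, i.e.\ deconvolving $\eta_t$. I would work in Fourier series on $\mathbb{T}^D$. The condition reads $\hat f(k)\,\hat\eta_t(k)=0$ for all $k\in\mathbb{Z}^D$, where $f=\nabla\cdot(q\nabla\phi)$. It then suffices that for each $k$ some $t$ in the interval has $\hat\eta_t(k)\neq0$. Since $\boldsymbol{\xi}_t=-(1-\alpha)\boldsymbol{\Delta}+\gamma\mathbf{z}$ with $\boldsymbol{\Delta}$ uniform, $\hat\eta_t(k)$ factorizes over coordinates into sinc-type factors $\mathrm{sinc}\big(\pi k_i(1-\alpha(t))\big)$, times a Gaussian factor when $\gamma>0$. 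As $\alpha$ is strictly increasing, each $k_i(1-\alpha(t))$ passes through integers only at isolated $t$. So for each $k$ the Fourier coefficient vanishes only on a discrete set of times, and a nonvanishing $t$ exists in any interval. This yields $\hat f\equiv0$.

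A secondary point is justifying the Fokker--Planck uniqueness near $t=1$, where the kernel degenerates. I would handle it by working on $[0,1-\epsilon]$, which suffices because the necessity argument only uses an interior interval.
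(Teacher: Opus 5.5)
Your proposal is correct and follows the paper's proof essentially step for step. The shared steps are the Fokker--Planck subtraction, the drift gap $c(t)(s'-s^q)$, its rewriting as $c(t)\,[\nabla\cdot(q\nabla\phi)]\ast\eta_t$, Fourier deconvolution via the non-vanishing sinc and Gaussian factors, and the integration by parts giving $\int q|\nabla\phi|^2=0$.

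The only minor variation is in the deconvolution step. You choose, for each $k$, a time in the interval where $\widehat{\eta}_t(k)\neq0$. The paper instead fixes a single $t_\star$ with $1-\alpha(t_\star)$ irrational, so that every Fourier coefficient is nonzero at once. Both arguments are valid.
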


\begin{proof}
\emph{(ii) and sufficiency in (i).} If $c\equiv0$, then $v'=v^{q}$ by~\eqref{eq:s-drift} for every $q$; the sampler is the true forward SDE of $q$'s interpolant, whose marginals are $p^q_t$ by~\eqref{eq:m-fwd}, so~\eqref{eq:s-fixdef} holds for every $q$. If instead $q=p_{\mathrm{target}}$, then $\phi\equiv0$, the error field~\eqref{eq:s-error} vanishes, both heads coincide with the true fields of $p_{\mathrm{target}}$, and again the sampler reproduces the interpolant marginals; $p_{\mathrm{target}}$ is a fixed point for every choice of $c$.

\emph{Necessity in (i).} Let $q$ satisfy~\eqref{eq:s-fixdef}. The sampler marginals and the interpolant marginals solve the Fokker--Planck equations
\begin{equation*}
    \partial_t\rho=-\nabla\!\cdot\!\big(\rho\,v'\big)+\tfrac{\sigma^2}{2}\Delta\rho,
    \qquad
    \partial_t p^q=-\nabla\!\cdot\!\big(p^q\,v^{q}\big)+\tfrac{\sigma^2}{2}\Delta p^q,
\end{equation*}
with the same initial datum $p_0$; under the standing smoothness all coefficients are continuous and the subtraction below is classical. Substituting $\rho_t=p^q_t$ into the first equation and subtracting the second, the transport and diffusion terms cancel and, by~\eqref{eq:s-drift} and~\eqref{eq:s-error},
\begin{equation}
    0=\nabla\!\cdot\!\Big(p^q_t\,c(t)\big(s'-s^{q}\big)\Big)
    =c(t)\,\nabla\!\cdot\!\big[(q\,\nabla\phi)\ast\eta_t\big]
    =c(t)\,\big[\nabla\!\cdot\!(q\,\nabla\phi)\big]\ast\eta_t,
    \qquad t\in(0,1),
    \label{eq:s-divfree}
\end{equation}
where the last equality moves the divergence through the convolution ($q\,\nabla\phi\in C^1$). Since $c$ is continuous and $c\not\equiv0$, there is an open interval $I\subset(0,1)$ with $c\neq0$ on $I$. Because $\alpha$ is continuous and strictly increasing, the set $\{t\in I:\ 1-\alpha(t)\in\mathbb{Q}\}$ is countable; choose $t_\star\in I$ with $\theta:=1-\alpha(t_\star)$ irrational.
Let $Q=[-\tfrac12,\tfrac12)^D$ and let
$\pi:\mathbb R^D\to\mathbb T^D$ denote wrapping. At $t_\star$, the
interpolant displacement relative to its terminal point is
\[
\boldsymbol{\xi}_{t_\star}
=-\theta\boldsymbol{\Delta}
+\gamma(t_\star)\mathbf z,
\]
where $\boldsymbol{\Delta}\sim\mathrm{Unif}(Q)$ and
$\mathbf z\sim\mathcal N(\mathbf0,I)$ are independent. The corresponding
torus convolution kernel is
\[
\eta_{t_\star}
:=\pi_{\#}\operatorname{Law}(\boldsymbol{\xi}_{t_\star}).
\]
For $\mathbf k\in\mathbb Z^D$, wrapping does not change a torus
character because
$\pi(\mathbf u)-\mathbf u\in\mathbb Z^D$. Therefore,
\begin{align}
\widehat{\eta}_{t_\star}(\mathbf k)
&=\mathbb E\!\left[
e^{-2\pi i\mathbf k\cdot\pi(\boldsymbol{\xi}_{t_\star})}
\right] \notag=\mathbb E\!\left[
e^{-2\pi i\mathbf k\cdot\boldsymbol{\xi}_{t_\star}}
\right] 
=e^{-2\pi^2\gamma(t_\star)^2\|\mathbf k\|^2}
  \prod_{j=1}^D
  \int_{-1/2}^{1/2}e^{2\pi i k_j\theta u}\,du 
\notag\\&=e^{-2\pi^2\gamma(t_\star)^2\|\mathbf k\|^2}
  \prod_{j:k_j\neq0}
  \frac{\sin(\pi k_j\theta)}{\pi k_j\theta}.
\label{eq:s-fourier}
\end{align}
A factor vanishes only if $k_j\theta\in\mathbb{Z}\setminus\{0\}$ for some $j$, which is impossible for irrational $\theta$; hence $\widehat{\eta}_{t_\star}(\mathbf{k})\neq0$ for every $\mathbf{k}$. Evaluating~\eqref{eq:s-divfree} at $t_\star$ and passing to Fourier coefficients, $\widehat{f}(\mathbf{k})\,\widehat{\eta}_{t_\star}(\mathbf{k})=0$ for all $\mathbf{k}$, where $f:=\nabla\!\cdot\!(q\,\nabla\phi)$ is continuous; therefore $f\equiv0$ on $\mathbb{T}^D$. Multiplying by $\phi$ and integrating by parts on the closed torus,
\begin{equation*}
    0=\int_{\mathbb{T}^D}\phi\,\nabla\!\cdot\!\big(q\,\nabla\phi\big)\,d\mathbf{x}
    =-\int_{\mathbb{T}^D}q\,\big|\nabla\phi\big|^2\,d\mathbf{x},
\end{equation*}
and since $q>0$ this forces $\nabla\phi\equiv\mathbf{0}$. On the connected torus $\phi$ is constant, so $q=\kappa\,p_{\mathrm{target}}$ with $\kappa>0$, and normalization gives $\kappa=1$, hence $q=p_{\mathrm{target}}$.
\end{proof}

\subsection{Transition kernels on the torus}
\label{app:torus-kernels}
The Euler--Maruyama discretization of the forward SDE \eqref{eq:m-fwd} on the flat torus $\mathbb{T}^d = [0,1)^d$ has transition densities given by \emph{wrapped} Gaussians: writing $\boldsymbol{\mu}_k(\mathbf{x}_k) = \mathbf{x}_k+\big[b_\theta+\tfrac{\sigma^2}{2}s_\theta\big](\mathbf{x}_k,t_k)\,\Delta t_k$ and $\tau_k^2 = \sigma(t_k)^2\Delta t_k$, the exact kernel is
\begin{equation}
    \overrightarrow{p}_k(\mathbf{x}_{k+1}\,|\,\mathbf{x}_k)
    = \mathcal{N}_{\mathbb{T}}\big(\mathbf{x}_{k+1};\,\boldsymbol{\mu}_k,\,\tau_k^2 I\big)
    \coloneqq \sum_{\mathbf{z}\in\mathbb{Z}^d}\mathcal{N}\big(\mathbf{x}_{k+1}+\mathbf{z};\,\boldsymbol{\mu}_k,\,\tau_k^2 I\big),
    \label{eq:app-wrapped}
\end{equation}
i.e., the Euclidean Gaussian density summed over all periodic images of the endpoint, and analogously for the backward kernel $\overleftarrow{p}_k$. The infinite sum in \eqref{eq:app-wrapped} is required for exactness but is dominated by a single term whenever the step standard deviation is small compared to the box length. Concretely, let
\begin{equation}
    \boldsymbol{\delta}_k \coloneqq \operatorname{wrap}\big(\mathbf{x}_{k+1}-\boldsymbol{\mu}_k\big) \in [-\tfrac12,\tfrac12)^d
\end{equation}
denote the minimum-image displacement, i.e., the representative of $\mathbf{x}_{k+1}-\boldsymbol{\mu}_k$ modulo $\mathbb{Z}^d$ with smallest norm. Retaining only the nearest image yields the \emph{minimum-image approximation}
\begin{equation}
    \log \overrightarrow{p}_k(\mathbf{x}_{k+1}\,|\,\mathbf{x}_k)
    \approx -\frac{\|\boldsymbol{\delta}_k\|^2}{2\tau_k^2} - \frac{d}{2}\log\big(2\pi\tau_k^2\big),
    \label{eq:app-min-image}
\end{equation}
which corresponds to selecting the image $\mathbf{z}^\star$ that maximizes the summand in \eqref{eq:app-wrapped}.

\subsection{Importance-weighted training} \label{SI: IW training}
A limitation of the training objective introduced above is that it relies exclusively on the force $-\nabla E$ and never evaluates the energy itself. Since forces only carry local information about the energy landscape, the algorithm can be blind to the relative masses of metastable basins that are separated by large energy barriers, and the learned sampler may misallocate probability mass across modes \cite{grenioux2026diffusiveclassificationlosslearning}.
 
As a remedy, we incorporate importance weights into the training objective,
\begin{equation}
    \mathcal{L}_{\mathrm{IW}}(\theta)=\mathbb{E}\Big[ \bar w(\mathbf{x}_{0:K})\big(\big\|s_\theta(\mathbf{x}_t,t)-\hat{\mathbf{s}}_t\big\|^2+\big\|b_\theta(\mathbf{x}_t,t)-\hat{\mathbf{b}}_t\big\|^2\big)\Big],
    \label{eq:iw-loss}
\end{equation}
with self-normalized weights $\bar w(\mathbf{x}_{0:K}) =  w(\mathbf{x}_{0:K})/ \mathbb{E}[ w(\mathbf{x}_{0:K})]$, where $w$ is defined as in \eqref{eq:m-logw} and the expectation is approximated by the empirical mean over the batch. The weights reweight trajectories according to the target Boltzmann measure and thereby reintroduce the global energy information that the force-based objective lacks.
 
In practice, however, $\bar w$ can suffer from high variance, in particular early in training when the model distribution is still far from the target, causing a small number of trajectories to dominate the batch and destabilizing optimization. To mitigate this and enable stable training, we smooth the importance weights by tempering them with an additional parameter $\eta \in [0,1]$,
\begin{equation}
    \bar w_\eta(\mathbf{x}_{0:K}) = \frac{w(\mathbf{x}_{0:K})^{\eta}}{\mathbb{E}\big[w(\mathbf{x}_{0:K})^{\eta}\big]},
    \label{eq:tempered-w}
\end{equation}
which recovers the unsmoothed weights for $\eta = 1$ and uniform weights for $\eta = 0$. The parameter $\eta$ is chosen adaptively per batch such that the weights retain a prescribed effective sample size (ESS). Concretely, for a batch of $N$ trajectories with weights $w_i = w(\mathbf{x}^{(i)}_{0:K})$, the normalized ESS of the tempered weights is given by
\begin{equation}
    \operatorname{ESS}(\eta) = \frac{\big(\sum_{i=1}^N w_i^{\eta}\big)^2}{\sum_{i=1}^N w_i^{2\eta}} \in [1, N],
    \label{eq:ess}
\end{equation}
which is continuous and monotonically non-increasing in $\eta$, with $\operatorname{ESS}(0) = N$. Given a target fraction $\kappa \in (0,1)$, we set
\begin{equation}
    \eta^\star = \max\big\{\eta \in [0,1] \,:\, \operatorname{ESS}(\eta) \ge \kappa N\big\},
    \label{eq:eta-star}
\end{equation}
that is, $\eta^\star = 1$ whenever the raw weights already satisfy the ESS constraint, and otherwise $\eta^\star$ is obtained by solving $\operatorname{ESS}(\eta) = \kappa N$ via a line search, e.g., bisection, which is cheap and robust due to the monotonicity of \eqref{eq:ess}.
 
Tempering trades variance for bias: the smoothed estimator no longer targets the exact importance-weighted objective but its counterpart under the geometric interpolation between model and target induced by $\eta^\star$. This trade-off is benign in our setting, as the bias reduces as training progresses and the model distribution approaches the target, the weights concentrate around unity, $\operatorname{ESS}(1) \to N$, and hence $\eta^\star \to 1$, recovering the unbiased objective \eqref{eq:iw-loss}.

\subsection{Tweedie's identity for stochastic interpolants}
\label{si:tweedie}
Inference-time steering with terminal-only rewards needs the posterior mean $\hat{\mathbf{x}}_1(\mathbf{x},t)=\mathbb{E}[\mathbf{x}_1\mid\mathbf{x}_t=\mathbf{x}]$ of the clean configuration implied by an intermediate state. For a stochastic interpolant $\mathbf{x}_t=\alpha_t\mathbf{x}_1+\beta_t\mathbf{x}_0+\gamma_t\mathbf{z}$ with $\mathbf{z}\sim\mathcal{N}(\mathbf{0},I)$ independent of the endpoints, this is available from the two learned heads alone, with no additional network. Tweedie's identity gives the conditional noise $\mathbb{E}[\mathbf{z}\mid\mathbf{x}_t]=-\gamma_t\,s(\mathbf{x}_t,t)$; taking $\mathbb{E}[\,\cdot\mid\mathbf{x}_t]$ of $\mathbf{x}_t$ and of its time derivative $\dot{\mathbf{x}}_t=\dot\alpha_t\mathbf{x}_1+\dot\beta_t\mathbf{x}_0+\dot\gamma_t\mathbf{z}$ (whose conditional mean is the velocity $b$) yields a $2\times2$ linear system in $\mathbb{E}[\mathbf{x}_1\mid\mathbf{x}_t]$ and $\mathbb{E}[\mathbf{x}_0\mid\mathbf{x}_t]$. Eliminating the latter,
\begin{equation}
    \mathbb{E}[\mathbf{x}_1\mid\mathbf{x}_t]
    =\frac{\beta_t\,b(\mathbf{x}_t,t)-\dot\beta_t\,\mathbf{x}_t+\gamma_t\big(\beta_t\dot\gamma_t-\dot\beta_t\gamma_t\big)\,s(\mathbf{x}_t,t)}{\beta_t\dot\alpha_t-\alpha_t\dot\beta_t}.
    \label{eq:s-tweedie-gen}
\end{equation}
On the torus, where the interpolant is built on the universal cover $\bar{\mathbf{x}}_t\in\mathbb{R}^d$ with $\mathbf{x}_t=\operatorname{wrap}(\bar{\mathbf{x}}_t)$, the identities for the score and velocity remain exact, but the first equation of the system becomes $\mathbb{E}[\bar{\mathbf{x}}_t\mid\mathbf{x}_t]=\mathbf{x}_t+\mathbf{m}_t(\mathbf{x}_t)$, with $\mathbf{m}_t(\mathbf{x})=\mathbb{E}[\bar{\mathbf{x}}_t-\mathbf{x}_t\mid\mathbf{x}_t=\mathbf{x}]$ the expected winding number of the lift. The exact torus identity is thus \eqref{eq:s-tweedie-gen} with $\mathbf{x}_t$ replaced by $\mathbf{x}_t+\mathbf{m}_t(\mathbf{x}_t)$, and it characterizes the mean lifted endpoint $\mathbb{E}[\bar{\mathbf{x}}_1\mid\mathbf{x}_t]$. Since $\mathbf{m}_t$ is not identifiable from the score and velocity, we heuristically drop it. Note that only the fractional part of $\mathbf{m}_t$ is left out, as its integer part is absorbed by the final wrap; this fractional part vanishes whenever the lifted density is well localized, in which case a single image dominates, $\mathbf{m}_t$ is (close to) integer, and the heuristic is exact. For the \name{} interpolant~\eqref{eq:m-interpolant}, $\beta_t=1-\alpha_t$ and the determinant is constant, $\beta_t\dot\alpha_t-\alpha_t\dot\beta_t=\dot\alpha_t$, giving the closed form used in practice,
\begin{equation}
    \hat{\mathbf{x}}_1(\mathbf{x},t)=\mathrm{wrap}\!\left[\mathbf{x}+\frac{1-\alpha(t)}{\dot\alpha(t)}\,b_\theta(\mathbf{x},t)
    +\left(\frac{(1-\alpha(t))\,\gamma(t)\dot\gamma(t)}{\dot\alpha(t)}+\gamma(t)^2\right)s_\theta(\mathbf{x},t)\right].
    \label{eq:s-tweedie}
\end{equation}

\subsection{Predictor--corrector sampling and churn}
\label{si:pc}
Integrating the forward SDE~\eqref{eq:m-fwd} with a finite step size and an imperfect learned score leaves the generated configurations slightly out of equilibrium: for the KA glass, plain Euler--Maruyama generation lands well above the reference mean energy, with a heavy high-energy tail. Two related refinements recover this gap by spending additional network evaluations on marginal-preserving corrections, without additional force evaluations.

\paragraph{Predictor--corrector.}
The corrector is driven by the learned score head $s_\theta(\mathbf{x},t)\approx\nabla\log p_t$ of the intermediate marginal. A predictor--corrector scheme \cite{song2020score} interleaves generation with in-place relaxation: after each Euler--Maruyama \emph{predictor} step $\mathbf{x}\leftarrow\mathrm{wrap}(\mathbf{x}+(b_\theta+\tfrac{\sigma^2}{2}s_\theta)\,\Delta t+\sigma\sqrt{\Delta t}\,\boldsymbol{\xi})$ that advances $t$, it applies several \emph{corrector} steps of overdamped Langevin dynamics driven by the learned score at the \emph{current} time,
\begin{equation}
    \mathbf{x}\leftarrow\mathrm{wrap}\!\Big(\mathbf{x}+\tfrac{h}{2}\,s_\theta(\mathbf{x},t)+\sqrt{h}\,\boldsymbol{\xi}\Big),\qquad \boldsymbol{\xi}\sim\mathcal{N}(\mathbf{0},I)\ \text{(zero-COM)},
    \label{eq:s-corrector}
\end{equation}
which leave the marginal $p_t$ invariant (their stationary distribution is $p_t$ when $s_\theta=\nabla\log p_t$) and pull the population back onto it without moving $t$. Because no closed-form density is available at intermediate $t$, the corrector step size is set by the signal-to-noise heuristic\cite{song2020score} $h(t)=2\big(r\,\|\boldsymbol{\xi}\|/\|s_\theta(\mathbf{x},t)\|\big)^2$, which can span several orders of magnitude across the noise schedule. The goal of this scheme is to keep the population on $p_t$ at every step so errors are repaired immediately rather than compounding into the wrong basin populations at $t=1$.

\paragraph{Churn.}
Predictor--corrector spends its correction budget in fixed blocks at each rung; \emph{churn} spreads the same budget along the trajectory by injecting extra score-paired noise into a single SDE. Assuming that $b_\theta$ and $s_\theta$ are perfectly learned, for any $\gamma(t)\ge0$ the family $\mathrm{d}\mathbf{x}=\big(b_\theta+\tfrac{\gamma}{2}s_\theta\big)\mathrm{d}t+\sqrt{\gamma}\,\mathrm{d}\mathbf{w}$ shares the marginals $p_t$ ($\gamma=\sigma^2$ is the trained SDE, $\gamma=0$ the probability-flow ODE), and SNR-matched churn takes $\gamma(t)\,\mathrm{d}t=h(t)$ per step from the same heuristic, so each step is one transport increment plus one corrector-strength kick at a single network evaluation. Setting an appropriate $r$ recovers most of the improvement at a fraction of the cost. We use these samplers to draw the high-fidelity configurations reported for structural and energetic observables reported in Figure \ref{fig2}. 

\subsection{Free energy, entropy and ensemble averages.}
\label{si:freeenergy}
Learning both the forward and backward dynamics turns \name{} into an estimator as well as a sampler: the two SDEs~\eqref{eq:m-fwd}--\eqref{eq:m-bwd} are exact time reversals of one another and share the marginals $p_t$, so the path weights are available in closed form, without the divergence computations required by probability-flow ODE likelihoods \cite{richter2024improved,vargas2024transport,du2026feat}. Discretizing $[0,1]$ into steps $t_0=0<\dots<t_K=1$ turns each SDE into a chain of Gaussian Euler--Maruyama kernels with matched variances. The forward kernel is Eq.~\eqref{eq:m-kernels}, and the backward kernel $\overleftarrow{p}_k(\mathbf{x}_k\mid\mathbf{x}_{k+1})$ is the Gaussian with drift $b_\theta-\tfrac{\sigma^2}{2}s_\theta$ evaluated at $(\mathbf{x}_{k+1},t_{k+1})$ and the same variance $\sigma(t_k)^2\Delta t_k$. A trajectory generated forward from the uniform prior therefore carries the path weight of Eq.~\eqref{eq:m-logw},
\begin{equation}
    \log w(\mathbf{x}_{0:K})
    =-\beta U(\mathbf{x}_K)-\log p_0(\mathbf{x}_0)
    +\sum_{k=0}^{K-1}\log\frac{\overleftarrow{p}_k(\mathbf{x}_k\mid\mathbf{x}_{k+1})}{\overrightarrow{p}_k(\mathbf{x}_{k+1}\mid\mathbf{x}_k)},
    \label{eq:s-logw}
\end{equation}
Written multiplicatively, $w$ is a ratio of two measures on the discrete path $\mathbf{x}_{0:K}$,
\begin{equation}
    w(\mathbf{x}_{0:K})=\frac{\overleftarrow{\mathbb{P}}(\mathbf{x}_{0:K})}{\overrightarrow{\mathbb{P}}(\mathbf{x}_{0:K})},
    \qquad
    \overleftarrow{\mathbb{P}}(\mathbf{x}_{0:K})=e^{-\beta U(\mathbf{x}_K)}\prod_{k=0}^{K-1}\overleftarrow{p}_k(\mathbf{x}_k\mid\mathbf{x}_{k+1}),
    \qquad
    \overrightarrow{\mathbb{P}}(\mathbf{x}_{0:K})=p_0(\mathbf{x}_0)\prod_{k=0}^{K-1}\overrightarrow{p}_k(\mathbf{x}_{k+1}\mid\mathbf{x}_k),
    \label{eq:s-pathratio}
\end{equation}
where $\overrightarrow{\mathbb{P}}$ is the law that actually generates the trajectory---the forward chain started from the prior---and $\overleftarrow{\mathbb{P}}$ is the backward chain started from the \emph{unnormalized} Boltzmann weight at $t=1$, whose total mass is $Z$ by construction. Taking the expectation of $w$ under the forward chain therefore integrates $\overleftarrow{\mathbb{P}}$ over all paths, and since each backward kernel is a normalized transition density, marginalizing $\mathbf{x}_0,\mathbf{x}_1,\dots,\mathbf{x}_{K-1}$ in turn removes them one at a time and leaves only the terminal factor,
\begin{equation}
    \mathbb{E}_{\overrightarrow{\mathbb{P}}}\big[w(\mathbf{x}_{0:K})\big]
    =\int\overleftarrow{\mathbb{P}}(\mathbf{x}_{0:K})\,\mathrm{d}\mathbf{x}_{0:K}
    =\int_{\mathbb{T}^D}e^{-\beta U(\mathbf{x}_K)}\,\mathrm{d}\mathbf{x}_K=Z,
    \label{eq:s-unbiased}
\end{equation}
so $w$ is an unbiased importance weight for the partition function. Nothing in this argument uses the quality of the learned drifts: \eqref{eq:s-unbiased} holds for \emph{any} $b_\theta,s_\theta$ and any step size, because the forward and backward kernels are normalized whatever drift they are built from. An imperfect model and a coarse discretization inflate the variance of $w$---and hence the number of trajectories needed---but never bias its mean. Every term in~\eqref{eq:s-logw} is a Gaussian log-density evaluated from the two network heads, plus the terminal energy $U(\mathbf{x}_K)$; the potential enters only here, at estimation time, and is never queried during training.

Algorithm~\ref{alg:infer} states generation and weight accumulation together. The resulting batch of weights supports three estimators. First, the partition function follows from $\widehat Z=\tfrac1M\sum_m w_m$, i.e.\ $\ln\widehat Z=\operatorname{logmeanexp}_m\log w_m$, and hence the \emph{absolute} free energy $F(T)=-k_{\mathrm B}T\ln\widehat Z$; because the model density is referenced to the normalized uniform prior through $\log p_0$, no undetermined additive constant remains. Second, the self-normalized weights $\bar w_m=w_m/\sum_{m'}w_{m'}$ turn any terminal observable into the reweighted average $\langle O\rangle=\sum_m\bar w_m\,O(\mathbf{x}_K^m)$, which stays asymptotically unbiased even when the sampler is imperfect \cite{noe2019boltzmann}; applied to $O=U$ this gives the reweighted mean energy $\langle U\rangle$. 
Third, combining the two yields the entropy
$S(T)=(\langle U\rangle-F)/T
=k_{\mathrm B}\bigl[\beta\langle U\rangle+\ln\widehat Z\bigr]$.
The reliability of all three is monitored by the effective sample size $\mathrm{ESS}=(\sum_m w_m)^2/\sum_m w_m^2$, which shrinks as the weights become heavy-tailed.

\begin{algorithm}[t]
\caption{\name{} inference with forward--backward path weights}
\label{alg:infer}
\begin{algorithmic}[1]
\State \textbf{input:} trained heads $(b_\theta,s_\theta)$; condition $\mathbf{c}$ (e.g.\ temperature $T$, with $\beta=\beta(\mathbf{c})$); noise schedule $\sigma(t)$; grid $0=t_0<\dots<t_K=1$; batch size $M$
\For{$m=1,\dots,M$}
  \State draw $\mathbf{x}_0\sim p_0=\mathrm{Unif}(\mathbb{T}^D)$ (zero-COM);\quad $\log w_m\gets-\log p_0(\mathbf{x}_0)$
  \For{$k=0,\dots,K-1$}
    \State $\Delta t_k\gets t_{k+1}-t_k$;\quad draw $\boldsymbol{\xi}\sim\mathcal{N}(\mathbf{0},I)$ (zero-COM)
    \State $\mathbf{x}_{k+1}\gets\mathrm{wrap}\!\big(\mathbf{x}_k+[b_\theta+\tfrac{\sigma^2}{2}s_\theta](\mathbf{x}_k,t_k,\mathbf{c})\,\Delta t_k+\sigma(t_k)\sqrt{\Delta t_k}\,\boldsymbol{\xi}\big)$ \Comment{forward step, Eq.~\eqref{eq:m-fwd}}
    \State $\log w_m\gets\log w_m+\log\overleftarrow{p}_k(\mathbf{x}_k\mid\mathbf{x}_{k+1})-\log\overrightarrow{p}_k(\mathbf{x}_{k+1}\mid\mathbf{x}_k)$ \Comment{Gaussian kernels}
  \EndFor
  \State $\log w_m\gets\log w_m-\beta\,U(\mathbf{x}_K;\mathbf{c})$ \Comment{only energy evaluation}
  \State store $(\mathbf{x}_K^m,\ \log w_m)$
\EndFor
\State \textbf{free energy:}\quad $\ln\widehat Z=\operatorname{logmeanexp}_m\log w_m$,\quad $F=-k_{\mathrm B}T\ln\widehat Z$
\State \textbf{self-normalized weights:}\quad $\bar w_m=e^{\log w_m}/\sum_{m'}e^{\log w_{m'}}$
\State \textbf{ensemble average / mean energy:}\quad $\langle O\rangle=\sum_m\bar w_m\,O(\mathbf{x}_K^m)$,\quad $\langle U\rangle=\sum_m\bar w_m\,U(\mathbf{x}_K^m)$
\State \textbf{entropy:}\quad $S=(\langle U\rangle-F)/T$
\State \textbf{return} samples $\{\mathbf{x}_K^m\}$, weights $\{\bar w_m\}$, and estimates $(\ln\widehat Z,F,\langle U\rangle,S)$
\end{algorithmic}
\end{algorithm}

For a stiff, low-temperature target the path weights become heavy-tailed and the ESS of this single-temperature estimator collapses, degrading the direct estimate of $F(T)$. Temperature amortization provides a remedy: the same network samples the whole family $\{p_{\mathrm{target}}(\cdot\mid T_k)\}$ along a ladder $T_1>\dots>T_L$, and the multistate Bennett acceptance ratio (MBAR) \cite{shirts2008statistically} reuses every configuration at every rung. With $N_k$ samples drawn at rung $k$ and the reduced potential $u_{kn}=\beta_k U(\mathbf{x}_n)$ of pooled sample $n$, MBAR solves for the dimensionless free energies $f_k$ (gauge $f_0=0$) that minimize the convex objective
\begin{equation}
    \phi(f)=\sum_n\operatorname{logsumexp}_k\!\big(\ln N_k+f_k-u_{kn}\big)-\sum_k N_k f_k,
    \label{eq:s-mbar}
\end{equation}
whose stationarity condition $\sum_n W_{kn}=1$ fixes the per-sample weights and any reweighted average at rung $k$,
\begin{equation}
    W_{kn}=\frac{e^{f_k-u_{kn}}}{\sum_\ell N_\ell\,e^{f_\ell-u_{\ell n}}},
    \qquad
    \langle O\rangle_k=\sum_n W_{kn}\,O(\mathbf{x}_n),
    \qquad
    \langle U\rangle_k=\sum_n W_{kn}\,U(\mathbf{x}_n).
    \label{eq:s-mbarw}
\end{equation}
The $f_k$ determine $\ln Z$ only up to a constant, $\ln\widehat Z(T_k)=-f_k+C$. We pin $C$ by anchoring the ladder to the path-weight estimate at the hottest rung, $C=\ln\widehat Z^{\mathrm{path}}(T_{\mathrm{hot}})+f_{\mathrm{hot}}$, where the liquid is easy, the model marginal is close to $p_{\mathrm{target}}$, and~\eqref{eq:s-logw} is reliable; this single number sets the absolute scale for every colder rung. 

These tools give three routes to the \emph{absolute} free energy of a cold target, illustrated in Fig.~\ref{fig:s-freeenergy}. \textbf{(a)}~Bridge the uniform (ideal-gas) reference to the cold end with a dense temperature ladder sampled by classical methods and combine the individual rungs with MBAR. While this gives an unbiased and absolutely normalized estimate through the analytically known ideal-gas end, it requires many finely spaced rungs and is hence computationally expensive. This is the route taken by the reference calculation (Section~\ref{si:km}). \textbf{(b)}~Estimate $F$ at the cold target directly from the forward--backward path weights~\eqref{eq:s-logw}: a single-shot estimate that suffers from high variance, as the ESS collapses at low temperatures. \textbf{(c)}~Combine the two: compute the absolute free energy once at a high temperature from the path weights, where they are reliable, then let the temperature-amortized model generate equilibrium samples at a few colder temperatures and stitch them with a short MBAR ladder to fill in the remaining free energy difference down to the cold target. Route~(c) inherits the absolute anchor of the path-weight estimator and the well-conditioned differences of MBAR, and is the estimator we use for the low-temperature curves in Figure \ref{fig2}.

\begin{figure}[t]
\centering
\begin{tikzpicture}[
  font=\small,
  pw/.style={<->,teal!55!black,line width=1.2pt,shorten >=1.5pt,shorten <=1.5pt},
  gen/.style={-{Stealth[length=4.5pt]},orange!85!black,line width=0.8pt},
  mbar/.style={decorate,decoration={brace,amplitude=4pt,mirror,raise=1pt},black!55,line width=0.6pt},
  tag/.style={font=\footnotesize\bfseries,anchor=east},
  rlab/.style={font=\footnotesize,align=left,anchor=west},
]
  \def\sp{1.05}\def\hw{0.40}
  \def\ya{0}\def\yb{-2.1}\def\ymod{-3.6}\def\yc{-4.8}
  \pgfmathsetmacro{\xend}{7*\sp}
  \def\xr{8.05}
  \draw[->,black!45,line width=0.6pt] (0,1.35) -- (\xend,1.35);
  \node[font=\footnotesize,black!60,anchor=south west] at (0,1.4) {uniform prior $p_0$ ($\beta{=}0$), hot};
  \node[font=\footnotesize,black!60,anchor=south east] at (\xend,1.4) {cold target $p_{\mathrm{target}}$};
  \node[tag] at (-0.65,\ya) {(a)};
  \foreach \i in {0,...,7}{\pgfmathsetmacro{\xx}{\i*\sp}\pgfmathsetmacro{\ff}{\i/7}\enghist{\xx}{\ya}{\ff}}
  \draw[mbar] (-\hw,{\ya-0.42}) -- ({\xend+\hw},{\ya-0.42});
  \node[font=\scriptsize,black!55,anchor=north] at ({\xend/2},{\ya-0.56}) {MBAR};
  \node[rlab] at (\xr,\ya) {PT-MCMC reference:\\swaps $+$ MBAR,\\many rungs (inefficient)};
  \node[tag] at (-0.65,\yb) {(b)};
  \enghist{0}{\yb}{0}
  \enghist{\xend}{\yb}{1}
  \draw[pw] (\hw,\yb) -- ({\xend-\hw},\yb);
  \node[font=\scriptsize,teal!45!black,anchor=south] at ({\xend/2},{\yb+0.05}) {forward--backward SDE (path weights)};
  \node[rlab] at (\xr,\yb) {one-shot path weights:\\high variance at low $T$};
  \node[tag] at (-0.65,\yc) {(c)};
  \def\xanch{3.15}
  \enghist{0}{\yc}{0}
  \enghist{\xanch}{\yc}{0.43}
  \draw[pw] (\hw,\yc) -- ({\xanch-\hw},\yc);
  \node[font=\scriptsize,teal!45!black,anchor=north] at ({\xanch/2},{\yc-0.06}) {path weights};
  \node[font=\scriptsize,anchor=south] at (\xanch,{\yc+0.42}) {anchor: absolute $\ln Z$};
  \def\xmA{4.55}\def\xmB{5.95}\def\xmC{7.35}
  \enghist{\xmA}{\yc}{0.62}\enghist{\xmB}{\yc}{0.81}\enghist{\xmC}{\yc}{1}
  \node[draw=orange!80!black,rounded corners=2pt,fill=orange!8,inner sep=3pt,font=\scriptsize,align=center] (mod) at ({(\xmA+\xmC)/2},\ymod) {amortized model $b_\theta,s_\theta$};
  \draw[gen] (mod) -- ({\xmA},{\yc+0.34});
  \draw[gen] (mod) -- ({\xmB},{\yc+0.34});
  \draw[gen] (mod) -- ({\xmC},{\yc+0.34});
  \draw[mbar] ({\xanch-\hw},{\yc-0.42}) -- ({\xmC+\hw},{\yc-0.42});
  \node[font=\scriptsize,black!55,anchor=north] at ({(\xanch+\xmC)/2},{\yc-0.56}) {short MBAR bridge};
  \node[rlab] at (\xr,\yc) {anchor $+$ short MBAR:\\absolute, well-conditioned};
\end{tikzpicture}
\caption{\textbf{Three routes to the absolute free energy of a cold target.} Each box is a schematic energy histogram at one temperature, from the uniform prior $p_0$ ($\beta{=}0$: broad and red, left) to the cold target $p_{\mathrm{target}}(\cdot\mid T_{\min})$. \textbf{(a)}~The parallel-tempering (PT-MCMC) reference bridges the ideal gas to the cold end along a dense ladder, with replica swaps exchanging configurations between neighboring rungs and MBAR combining them~\eqref{eq:s-mbarw}; unbiased but requiring many rungs. \textbf{(b)}~A single forward--backward SDE path-weight estimate (teal $\leftrightarrow$)~\eqref{eq:s-logw} reaches the cold target directly from the uniform prior, but its variance grows as the effective sample size collapses. \textbf{(c)}~The hybrid estimator: a reliable path-weight estimate anchors the absolute $\ln Z$ at a high temperature, the temperature-amortized model then generates equilibrium samples at a few colder temperatures, and a short MBAR bridge closes the remaining free energy difference.}
\label{fig:s-freeenergy}
\end{figure}

\subsection{Derivation of the inference-time steering weights}
\label{si:rne-derivation}
Let $\mathbb{Q}(\mathbf{x}_{0:K})=p_0(\mathbf{x}_0)\prod_{k}\overrightarrow{q}_k(\mathbf{x}_{k+1}\,|\,\mathbf{x}_k)$ be the law of the proposal chain. For any unnormalized terminal density $\tilde p$ and any normalized backward kernels $\overleftarrow{c}_k$, the weight
\begin{equation}
    w(\mathbf{x}_{0:K})
    =\frac{\tilde p(\mathbf{x}_K)}{p_0(\mathbf{x}_0)}
    \prod_{k=0}^{K-1}\frac{\overleftarrow{c}_k(\mathbf{x}_k\,|\,\mathbf{x}_{k+1})}{\overrightarrow{q}_k(\mathbf{x}_{k+1}\,|\,\mathbf{x}_k)}
    \label{eq:si-generic-w}
\end{equation}
satisfies $\mathbb{E}_{\mathbb{Q}}[w\,O(\mathbf{x}_K)]=\int \tilde p\,O\,\mathrm{d}\mathbf{x}$ for any observable $O$ as the proposal density cancels, and $\mathbf{x}_0,\dots,\mathbf{x}_{K-1}$ integrate out in turn since each $\overleftarrow{c}_k$ is normalized. The choice of $\overleftarrow{c}_k$ therefore affects only the variance. The path weight \eqref{eq:m-logw} is the case $\tilde p=e^{-\beta U}$, $\overrightarrow{q}_k=\overrightarrow{p}_k$, $\overleftarrow{c}_k=\overleftarrow{p}_k$.
 
For steering, let $p_{\theta,k}$ denote the (intractable) marginals of the learned chain, with $p_{\theta,0}=p_0$ and $p_{\theta,K}=p_\theta$. Choosing the tilted model ensemble $\tilde p=p_{\theta}\,e^{\eta r}$ and the exact time reversal of the learned chain, $\overleftarrow{c}_k(\mathbf{x}_k\,|\,\mathbf{x}_{k+1})=\overrightarrow{p}_k(\mathbf{x}_{k+1}\,|\,\mathbf{x}_k)\,p_{\theta,k}(\mathbf{x}_k)/p_{\theta,k+1}(\mathbf{x}_{k+1})$, the marginals telescope against $p_\theta$ and the prior, and \eqref{eq:si-generic-w} reduces to
\begin{equation}
    \log w(\mathbf{x}_{0:K})
    =\eta\, r(\mathbf{x}_K)
    +\sum_{k=0}^{K-1}\log\frac{\overrightarrow{p}_k(\mathbf{x}_{k+1}\,|\,\mathbf{x}_k)}{\overrightarrow{q}_k(\mathbf{x}_{k+1}\,|\,\mathbf{x}_k)},
    \label{eq:si-steering-w}
\end{equation}
free of intractable quantities. Writing $\eta\,r(\mathbf{x}_K)=\sum_{k}\big[r_{t_{k+1}}(\mathbf{x}_{k+1})-r_{t_k}(\mathbf{x}_k)\big]$, which uses only the boundary conditions $r_0\equiv0$ and $r_1=\eta r$, yields the recursion \eqref{eq:m-smc}. The intermediate reward hence cancels from the terminal weight; it matters only through resampling, as the population after step $k$ approximates $\propto p_{\theta,k}\,e^{r_{t_k}}$, so $r_t$ controls how gradually the tilt is introduced. 
 
\subsection{Choices of intermediate reward and proposal for reward tilting}
\label{si:rne}
We steer the pretrained model toward the tilted target $p_\eta\propto p_{\mathrm{target}}\,e^{\eta r}$ by sequential Monte Carlo with the Radon--Nikodym estimator (RNE) \cite{he2025rne}, which propagates $M$ weighted particles through a proposal SDE with drift $a(\mathbf{x},t)$ and an intermediate reward $r_t$, updating each log-weight by Eq.~\eqref{eq:m-smc}; writing $R(\mathbf{x})=\eta\,r(\mathbf{x})$ for the terminal reward, we collect below the choices of $a$ and $r_t$ that this construction leaves open. Note that we are effectively sampling from $p_\eta\propto p_{\theta}\,e^{\eta r}$, $p_{\theta}$ being the terminal distribution of the pretrained model, with $p_{\theta} = p_{\mathrm{target}}$ only when the pretrained model is perfect.

\paragraph{The optimal reward: Doob's $h$-transform.} Zero weight variance is achieved when the intermediate targets are the true noised marginals of the tilt, i.e.
\begin{equation}
    e^{r_t^{\star}(\mathbf{x})}=\mathbb{E}\big[e^{R(\mathbf{x}_1)}\,\big|\,\mathbf{x}_t=\mathbf{x}\big],
    \label{eq:s-doob}
\end{equation}
the Doob's $h$-transform (expectation under the base chain), together with the correspondingly twisted proposal $a$; then the per-step weight is identically one. Equation~\eqref{eq:s-doob} is intractable in general, and the practical rewards below are surrogates for it.

\paragraph{Annealed and Tweedie surrogates, and the Jensen gap.} The \emph{annealed} reward $r_t(\mathbf{x})=\alpha(t)\,R(\mathbf{x})$ scores the noisy state directly whereas the \emph{Tweedie} reward $r_t(\mathbf{x})=R(\hat{\mathbf{x}}_1(\mathbf{x},t))$ scores the predicted clean state~\eqref{eq:s-tweedie}; it replaces the exact $h$-transform $\mathbb{E}[e^{R(\mathbf{x}_1)}\mid\mathbf{x}_t]$ by $e^{R(\mathbb{E}[\mathbf{x}_1\mid\mathbf{x}_t])}$, a delta-method (Jensen) approximation. The Jensen gap is exact at $t\to1$ (the posterior collapses) and for affine $R$, but is large at small $t$ when the conditional law of $\mathbf{x}_1$ is multimodal, since $\hat{\mathbf{x}}_1$ then averages across modes and lands in a low-density region. For a reward that depends on $\mathbf{x}_1$ only through a \emph{linear} collective variable $\xi$, this pathology cannot occur, because $\xi(\hat{\mathbf{x}}_1)=\mathbb{E}[\xi(\mathbf{x}_1)\mid\mathbf{x}_t]$ exactly; only the curvature of $R$ in $\xi$ remains, which the next construction handles in closed form.

\paragraph{Gaussian Doob's $h$-transform for umbrella sampling.} When the collective variable $\xi$ is linear in $\mathbf{x}$ and the reward is Gaussian in $\xi$---as for the umbrella windows of Section~\ref{si:km-umbrella}, $R=\log g$ with $g=\exp[-\tfrac12\sum_j\beta\kappa_j(\xi_j-\bar\xi_j)^2]$---the Doob's $h$-transform~\eqref{eq:s-doob} reduces to a one-dimensional Gaussian integral. Modelling the conditional law of $\xi_j(\mathbf{x}_1)$ given $\mathbf{x}_t$ as Gaussian with mean $\hat\xi_{1,j}=\xi_j(\hat{\mathbf{x}}_1)$ and variance $v_t$, and using $\int\mathcal{N}(z;\mu,v)\,e^{-\frac{b}{2}(z-c)^2}\mathrm{d}z=(1+bv)^{-1/2}\exp[-\tfrac{b(\mu-c)^2}{2(1+bv)}]$, gives the reward
\begin{equation}
    r_t(\mathbf{x})=-\frac12\sum_j\left[\frac{\beta\kappa_j\big(\hat\xi_{1,j}(\mathbf{x},t)-\bar\xi_j\big)^2}{1+\beta\kappa_j v_t}+\log\big(1+\beta\kappa_j v_t\big)\right]-G_0,
    \label{eq:s-gaussh}
\end{equation}
with $G_0$ the (particle-independent) value of the bracket at $t=0$, added back to the evidence.  The posterior variance is set by the empirical conditional second moment $v_t=\mathbb{E}[(\xi(\mathbf{x}_1)-\hat\xi_1(\mathbf{x}_t))^2]$ measured on untilted trajectories.

\subsection{Model architecture and size-transfer}
\label{si:architecture}
Both learned fields---the velocity $b_\theta$ and the score $s_\theta$ of Eq.~\eqref{eq:m-bs}---are emitted as two heads of a single equivariant graph neural network with a shared trunk, acting on the periodic configuration $\mathbf{x}\in\mathbb{T}^D$. We use a torus-adapted PaiNN backbone \cite{schutt2021equivariant}. The same architecture serves every experiment; only the conditioning inputs change between the temperature- and composition-amortized models, and the task-specific hyperparameters, e.g., cutoffs, embedding dimensions, and layer and parameter counts, are collected in Section~\ref{si:arch}.

\paragraph{Equivariances.} The network respects the exact symmetries of the target. \emph{Translation} invariance is built in by forming every pairwise quantity from minimum-image displacements $\mathrm{mic}(\mathbf{x}_i-\mathbf{x}_j)$ under periodic boundaries and by projecting both outputs onto the zero-center-of-mass subspace, matching the prior and the removed global-translation mode. \emph{Rotation and reflection} equivariance follows PaiNN's split into invariant scalar features $\mathbf{s}_i$ and equivariant vector features $\mathbf{v}_i$ per atom: scalars are updated from rotation-invariant distances and inner products, while vectors are updated only by scaling and linear combination of other vectors, so each head outputs a genuine $O(d)$-equivariant tangent vector.

\paragraph{Dynamic graph.} Messages are exchanged on a neighbor graph rebuilt from the current configuration at every network evaluation: an edge $(i,j)$ is present when its minimum-image distance falls inside a finite cutoff window, and each edge distance is expanded in a smooth radial basis modulated by a polynomial envelope that decays to zero with vanishing derivative at the cutoff, so the field stays continuous as pairs enter and leave. Rebuilding the graph dynamically from the live positions rather than freezing a topology from a reference lattice lets the same weights act on the disordered, rearranging configurations that arise while transporting the uniform prior toward the target.

\paragraph{Conditioning.} Thermodynamic and chemical conditions enter as additional per-atom input embeddings, concatenated with the interpolant-time and species embeddings. Temperature enters both through the rescaled force $\beta(T)\mathbf{F}$ that defines the score target and through a log-normalized temperature embedding; composition enters through the species-dependent potential and a composition embedding. Because conditioning only augments the node features, a single backbone amortizes an entire family of Boltzmann ensembles (Section~\ref{si:algorithm}).

\paragraph{Size-transferability.} The locality of the message passing is what makes the learned sampler transfer across system size. Each score target in Eq.~\eqref{eq:m-loss} is the rescaled force $\beta\mathbf{F}(\mathbf{x}_1)$, an intensive per-atom quantity, and each atom's updated field depends only on the neighbors inside its finite cutoff. The two heads therefore define per-atom maps that are well-defined for any atom count $N$: a network trained on moderate supercells can be evaluated, without retraining, on larger cells simply by rebuilding the periodic neighbor graph at the new size. This transfer is accurate only insofar as the distribution of local environments is preserved---the number density and the short- to medium-range correlations must match between training and deployment, and finite-size effects on quantities that depend on correlations longer than the cutoff (or the training box) are not removed by it.

\subsection{LLM-EA implementation}
\label{si:llmea}

We include the pseudocode for LLM-EA implementation in Algorithm \ref{alg:llm_ea}.

\begin{algorithm}[t]
\caption{Asynchronous LLM-EA active learning with pretrained ATLAS}
\label{alg:llm_ea}
\begin{algorithmic}[1]
\Require Pretrained ATLAS model $p_\theta$, composition space, reward/objective function $r$, an LLM

\State Sample $\{\textbf{c}_i\}$ uniformly and initialize composition buffer $\mathcal{C}$
\State Initialize Pareto frontier buffer $\mathcal{B} = \varnothing$

\While{evaluation budget is not exhausted}
\If{computing slot is available}
    \State Submit the ATLAS evaluation jobs to evaluate $r(p_\theta(\cdot, \mathbf{c_i}))$ from $\mathcal{C}$
    \State Collect results $\{r_i\}$, update $\mathcal{B}$
    \State Prompt the LLM to refill $\mathcal{C}$ from the accumulated knowledge in $\mathcal{B}$
\Else
    \State sleep and wait for the next available slot
\EndIf
\EndWhile

\State \Return the evaluated population in $\mathcal{B}$
\end{algorithmic}
\end{algorithm}

\section{Experimental setup}
\subsection{Additional details on validation of ATLAS}
\label{si:km}

\paragraph{Kob–Andersen reference data.}
Because conventional Monte Carlo does not equilibrate the supercooled KA model at a single temperature, ground-truth ensembles are generated by non-reversible parallel tempering (NRPT) \cite{hukushima1996exchange,syed2022non}, which couples a ladder of $K$ replicas that anneal a hot, freely mixing liquid to the cold target. NRPT combines three ingredients: local exploration by Hamiltonian Monte Carlo \cite{duane1987hybrid,neal2011mcmc}, which traverses the stiff neighbour-pair directions that overdamped Langevin cannot at the cold end; replica communication by the deterministic even/odd non-reversible swap scheme, which gives ballistic rather than diffusive transport along the ladder; and adaptive schedule optimization, which repositions the rungs during warm-up so that the cumulative swap rejection is uniform across the ladder \cite{syed2022non}. After warm-up the step sizes and schedule are frozen, so stored configurations are unbiased draws from the tempered densities. For $N=64$ we ran $8$ independent chains of $10{,}000$ samples per rung, merged by rung to $80{,}000$ configurations per temperature over a $16$-rung ladder spanning $T\in[0.2,5.0]$; generated ensembles are compared against the reference rung nearest each reported temperature ($T=0.5,0.3,0.2$ against rungs at $0.507,0.288,0.200$).

One NRPT sweep advances all $K=16$ rungs by $n_{\mathrm{ex}}=20$ HMC moves (each $L=10$ leapfrog steps, i.e.\ $L{+}1=11$ force evaluations) and then attempts one deterministic even/odd swap ($\approx2$ energy evaluations per rung), for a per-sweep cost of $K\,[\,n_{\mathrm{ex}}(L{+}1)+2\,]=16\times(20{\cdot}11+2)=3552$ target evaluations. The adaptive warm-up runs $6000$ sweeps, $\approx2.13\times10^{7}$ evaluations per chain, over which the rung schedule and the per-rung step sizes are tuned while the chain relaxes from a random initial configuration; both are then frozen, so the collected draws are unbiased. Collection runs a further $M=10{,}000$ sweeps, $\approx3.55\times10^{7}$ evaluations per chain, storing one configuration per rung per sweep. Successive stored configurations are therefore separated by $n_{\mathrm{ex}}=20$ HMC moves---the intrinsic thinning that decorrelates the stiff cold rung---with no additional subsampling. Each chain thus costs $\approx5.68\times10^{7}$ target evaluations and the full eight-chain reference $\approx4.55\times10^{8}$.
The reference values in Fig.~\ref{fig2}e are generated from these samples. We solve the MBAR equations for $\ln Z(T{=}0.2)$, and plot the relative free energy error $|\Delta F|/|F|=|\ln Z-\ln Z_{\mathrm{ref}}|/|\ln Z_{\mathrm{ref}}|$ against the fully converged value $\ln Z_{\mathrm{ref}}$ (all eight chains at $M=10{,}000$). The "matched" value uses $M=2000$ configurations, exactly the per-temperature sample count of \name{}, whose cost is the training evaluations ($\approx2\times10^{5}$) plus one generation of $2000$ samples per rung ($K\cdot2000=3.2\times10^{4}$) that MBAR then reuses. The "long-run" value uses $M=5000$. Lastly, using $M=10{,}000$ gives zero error by construction and is omitted from the logarithmic axis.

\paragraph{Cage-escape free energy profile.}
\label{si:km-umbrella}
The steering demonstration (Fig.~\ref{fig3}b) computes the potential of mean force for pulling a tagged $B$ particle out of its first coordination shell, along a collective variable (CV) built from a tagged $B$--$A$ pair. Two symmetries of the KA ensemble constrain the CV: translation invariance forces it to depend on relative positions only, and permutation invariance within a species means a single tagged particle carries no structural anchor---so a well-defined single-particle pull coordinate requires a second tagged particle. We therefore tag one $B$ particle $b$ and one $A$ particle $a$ and use their minimum-image displacement $\boldsymbol{\Delta}=\mathrm{mic}(\mathbf{x}_b-\mathbf{x}_a)$. Its length $|\boldsymbol{\Delta}|$ is the physically clean radial coordinate (rotation invariant, no auxiliary restraint), whose profile is exactly the pair potential of mean force $-k_{\mathrm B}T\log[2\pi r\,g_{AB}(r)/L^2]$. The runs reported here instead use the \emph{axial} coordinate $\xi=\Delta_y$ together with a harmonic transverse restraint $u_\perp=\tfrac12\kappa_\perp\Delta_x^2$ (stiffness $\kappa_\perp=100$) that keeps the pull on-axis; this restraint is part of the ensemble whose free energy is reported and is applied identically in the reference. The restraint opens a computable escape channel: below the separation $r^\star$ solving $V_{AB}'(r^\star)+\kappa_\perp r^\star=0$ the pair slides sideways at fixed separation rather than compressing, so only $\xi\gtrsim r^\star$ measures compression; results below $r^\star$ are labelled accordingly.

The profile is stratified into overlapping umbrella windows with harmonic biases $u_k=\tfrac12\kappa_k(\xi-\xi_k)^2$. Each window is a tilted target $p_{\mathrm{target}}\,g_k$ with $g_k=\exp[-\beta(u_\perp+u_k)]$, sampled by the inference-time RNE scheme (Section~\ref{si:rne}) with the closed-form Gaussian $h$-transform reward~\eqref{eq:s-gaussh} and score-space guidance, resampling when the effective sample size falls below half the population. We report the experiment at the target temperature $T=0.2$, which stiffens the umbrella and lowers ancestral diversity to a few percent of the population; since the RNE evidence is unbiased at any effective sample size, we reduce variance by pooling $K=6$ independent seeds at $M=1024$ particles and $700$ integration steps, whose seed-to-seed scatter also furnishes an honest error bar.

\subsection{Parameters and training details}
\label{si:training}
This section details the hyperparameters and design choices behind all experiments reported in the paper. Two families of models are used throughout: a single temperature-amortized sampler for the two-dimensional KA glass former (Section~\ref{si:km}), and composition-amortized samplers for the metallic glasses---CuZr and CrCoNi with EAM potentials, the five-element Fe-Ni-Cr-Co-Cu glass with EAM, and the eight-element Fe-Mn-Ni-Ti-Cu-Cr-Co-Al pool with the MACE-MPA-0 MLIP. 

\begin{table*}[t]
\centering
\small
\renewcommand{\arraystretch}{1.15}
\begin{tabular}{@{}p{0.19\textwidth}p{0.36\textwidth}p{0.36\textwidth}@{}}
\hline
 & \textbf{KA (temperature-amortized)} & \textbf{Metallic glasses (composition-amortized)} \\
\hline
\textbf{System specific} & & \\
System & $80{:}20$ KA mixture in $d=2$; $N=64$, $\rho=1.0$, $L=8$; $n_A=51$, $n_B=13$ & CuZr, CrCoNi, Fe-Ni-Cr-Co-Cu, Fe-Mn-Ni-Ti-Cu-Cr-Co-Al in $d=3$; $N=128$ \\
Potential & truncated--shifted Lennard--Jones, $C^1$ logarithmic core below $0.8\,\sigma_{ss'}$ & EAM \cite{hale2018evaluating}; MACE-MPA-0 \cite{batatia2025foundation} for the eight-element pool; linear core below $1.5$\,\AA \\
Cell & square, side fixed by $\rho=1.0$ & cubic, side from the composition mixing rule~\eqref{eq:densrule} \\
Conditioning & temperature, $T\in[0.2,5.0]$, sampled log-uniformly & composition vector $\mathbf{c}$, drawn per graph, at fixed $T$ ($700$\,K for CuZr, $500$\,K otherwise) \\
\hline
\textbf{Target and schedules} & & \\
Interpolant & deterministic, $\gamma(t)\equiv0$ & same \\
Noise schedule $\sigma(t)$ & variance-exploding, $\sigma_{\min}{=}0.01$, $\sigma_{\max}{=}1.0$ & EDM \cite{karras2022elucidating}, $\sigma_{\min}{=}0.02$, $\sigma_{\max}{=}0.8$, $\rho_{\mathrm{EDM}}{=}7$ \\
$\alpha(t)$ & $\kappa(t)/\kappa(1)$, $\kappa(t){=}\int_0^t\!\sigma(u)^2\,\mathrm{d}u$ & same \\
Prior $p_0$ & uniform on the cell, zero-COM & same \\
SDE steps (generation) & $250$ (Euler--Maruyama) & $100$ (Euler--Maruyama) \\
\hline
\textbf{Network} & & \\
Backbone & torus PaiNN, $6$ layers, width $64$ & same \\
Radial basis / cutoff & $10$ Gaussians, $[0.4,2.6]$ in LJ units, poly.\ envelope & $10$ Gaussians, cutoff $5$\,\AA, poly.\ envelope \\
Graph & dynamic (rebuilt per eval), minimum image & same \\
Time / species embed.\ & $10$ / $8$ dims & same \\
Condition embed.\ & $16$-dim log-normalized temperature & composition fractions $\mathbf{c}$, one entry per element \\
Trainable parameters & $334{,}164$ ($290{,}640$ trunk $+\,2{\times}21{,}762$) & same trunk and heads; input layer scales with the number of elements \\
\hline
\textbf{Optimization} & & \\
Optimizer & Adam, lr $3{\times}10^{-4}$, grad-norm clip $1.0$ & same \\
Outer $\times$ inner steps & $200\times1000$ ($2{\times}10^5$ updates) & $400\times200$ (CuZr, CrCoNi) or $300\times200$ (high-entropy glasses) \\
Batch size & $256$ (generation and regression) & same \\
Replay buffer & size $5000$, $20\%$ refreshed per outer step & size $4000$, fully refreshed per outer step \\
Score-target norm clip & $750$ & same \\
Min-distance filter & $0.01$ (LJ units) & --- \\
Force evaluations & $\approx2.0\times10^5$ in total & $\approx1.6\times10^6$ or $\approx1.2\times10^6$ in total ($400$ or $300$ outer steps) \\
Weighting via \eqref{eq:iw-loss} & No & Yes \\
Target fraction & --- & $\kappa=0.8$ \\
\hline
\end{tabular}
\caption{Hyperparameters and design choices for the two families of \name{} models: the temperature-amortized KA sampler ($N=64$, $\rho=1.0$) and the composition-amortized metallic-glass samplers. Entries reading ``same'' repeat the KA value; ``---'' marks a setting that does not apply.}
\label{si:tab-hparams}
\end{table*}

\paragraph{Deterministic interpolant.} We set the bridge width to zero, $\gamma(t)\equiv0$, so the interpolant reduces to the pure-transport bridge $\mathbf{x}_t=\mathrm{wrap}(\mathbf{x}_0+\alpha(t)\boldsymbol{\Delta})$ and the velocity target drops its $\dot\gamma(t)\gamma(t)\hat{\mathbf{s}}$ correction, leaving $\hat{\mathbf{b}}=\dot\alpha(t)\boldsymbol{\Delta}$ in Eq.~\eqref{eq:m-loss}. The score head is unaffected---it is still regressed on $\hat{\mathbf{s}}=\beta\mathbf{F}(\mathbf{x}_1)$ and remains available for the SDE sampler, the churn corrector and inference-time steering---so only the injected bridge noise is removed.

\paragraph{Schedules.} Both the transport schedule $\alpha(t)$ and the sampling noise $\sigma(t)$ derive from a single noise schedule. For the KA model this is the geometric variance-exploding schedule with $\sigma_{\min}=0.01$ and $\sigma_{\max}=1.0$: $\sigma(t)=\sigma_{\min}(\sigma_{\max}/\sigma_{\min})^{1-t}\sqrt{2\ln(\sigma_{\max}/\sigma_{\min})}$, and $\alpha(t)=\kappa(t)/\kappa(1)$ is the normalized cumulative diffusion variance $\kappa(t)=\int_0^t\sigma(u)^2\,\mathrm{d}u$, which rises monotonically from $\alpha(0)=0$ to $\alpha(1)=1$. The metallic-glass models keep the same $\alpha(t)=\kappa(t)/\kappa(1)$ construction but replace the geometric interpolation of $\sigma$ by the EDM schedule of Karras et al.~\cite{karras2022elucidating}, which interpolates $\sigma^{1/\rho_{\mathrm{EDM}}}$ linearly in $t$ between $\sigma_{\max}=0.8$ at $t=0$ and $\sigma_{\min}=0.02$ at $t=1$, with exponent $\rho_{\mathrm{EDM}}=7$. The exponent concentrates the noise levels towards the low-noise end, where the target is stiffest, which is what allows generation with $100$ integration steps rather than the $250$ used for KA.

\paragraph{Conditioning and amortization.} The whole temperature family is learned by one network: each batch draws its temperature log-uniformly on $[0.2,5.0]$ and $T$ enters the training signal only through the rescaled force $\beta(T)\mathbf{F}$ and the log-normalized temperature embedding (Section~\ref{si:arch}). Sampling and embedding $T$ on a logarithmic axis prevents the low-temperature rungs---where the target is stiffest and matters most---from being starved relative to the broad high-temperature range. The composition fraction is uniformly sampled over the simplex.

\paragraph{Replay buffer and force evaluations.} For the KA model the buffer holds $5000$ labeled configurations; it is filled once at the start and then $20\%$ ($1000$) are regenerated and re-labeled each outer step, so training consumes only $\approx 2.0\times10^{5}$ force evaluations in total (each a single $N=64$ force call), amortized across the $2\times10^{5}$ gradient updates---roughly $2.5\times10^{2}$ regression samples per labeled configuration. Crucially, only forces are queried during training; the potential energy $U$ itself is never evaluated until estimation time (Eq.~\eqref{eq:m-logw}). 

The metallic-glass models use a smaller buffer of $4000$ configurations but refresh it completely at every outer step, so each outer step regenerates and re-labels the whole buffer before taking its $200$ inner gradient steps. Over the $400$ outer steps used for CuZr and CrCoNi this amounts to $\approx1.6\times10^{6}$ force evaluations for $8\times10^{4}$ updates, and over the $300$ outer steps used for the high-entropy glasses to $\approx1.2\times10^{6}$ force evaluations for $6\times10^{4}$ updates.

\paragraph{Inference-time sampling for the reported observables.} The structural and energetic KA panels (radial distribution functions and energy histograms) are drawn with the SNR-matched churn sampler of Section~\ref{si:pc}, which adds one score-paired Langevin kick per Euler step of the $250$-step generation SDE at a single extra network evaluation and no additional force evaluation. The per-step churn budget is $h(t)=\lambda(t)\,\min\!\big(2(r\,\|\mathbf{z}\|/\|s_\theta\|)^2,\,h_{\max}\big)$ with signal-to-noise ratio $r=0.16$ and cap $h_{\max}=0.05$. Its strength follows the temperature rule $\lambda(T)=4\,T^{-5/4}$ which was determined empirically. Because the kick uses the learned score rather than $\nabla U$, churn costs zero force or energy calls.

\subsection{Architecture details}
\label{si:arch}
We record here the concrete choices that specialize the generic architecture of Section~\ref{si:architecture} to the two-dimensional KA glass; the composition-amortized metallic-glass models follow the same two-head template with the modifications described at the end of this section. The trunk is a torus PaiNN backbone \cite{schutt2021equivariant} of $6$ message/update blocks at hidden width $64$, each followed by an equivariant RMS normalization of the scalar and vector channels. The dynamic neighbour graph keeps an edge $(i,j)$ when its minimum-image distance lies in $[r_{\min},r_{\max}]=[0.4,2.6]$ in LJ units, where $r_{\max}=2.6$ coincides with the potential cutoff $2.5\,\sigma_{AA}$ plus a small skin and $r_{\min}=0.4$ excludes only near-coincident pairs; each edge distance is expanded in $10$ Gaussian radial basis functions modulated by a polynomial envelope (exponent $5$). Each atom is initialized by concatenating a $10$-dimensional embedding of the interpolant time $t$, a learned $8$-dimensional embedding of the fixed species ($A$ or $B$), and a $16$-dimensional sinusoidal embedding of the conditioning temperature, normalized as $(\ln T-\ln T_{\min})/(\ln T_{\max}-\ln T_{\min})$ over $[T_{\min},T_{\max}]=[0.2,5.0]$ so that the cold rungs occupy a fair fraction of the input range (paired with the log-uniform temperature sampling used in training). The resulting $34$-dimensional node features are linearly mapped to width $64$ and passed through the $6$ blocks, after which two independent gated-equivariant vector heads---each a $64\to32\to1$ reduction over the vector channels---emit the velocity $b_\theta$ and the score $s_\theta$, with the centre-of-mass component of each output removed. Sharing the entire backbone, the two heads bring the model to $334{,}164$ trainable parameters: $290{,}640$ in the shared trunk and $21{,}762$ in each head.

The composition-amortized models keep the trunk and the two heads unchanged and differ only in the neighbour graph and the node inputs. The dynamic graph uses a cutoff of $5$\,\AA{} for all metallic glasses---CuZr, CrCoNi, Fe-Ni-Cr-Co-Cu and the eight-element Fe-Mn-Ni-Ti-Cu-Cr-Co-Al pool---in place of the LJ-unit window $[0.4,2.6]$ of the KA model. Each atom is initialized from the same time and species embeddings, but the temperature embedding is replaced by the composition-fraction vector $\mathbf{c}$ itself, whose length equals the number of elements in the pool, so the input width---and with it the first linear layer---grows by one channel per element. The conditioning is attached per graph rather than per model: every configuration in a batch may carry a different composition, so one gradient step covers many points of composition space at once, which is what makes composition amortization cheap to train. Temperature is held fixed within each of these models ($700$\,K for CuZr, $500$\,K for the remaining metallic glasses), so no temperature embedding is needed and $\beta$ is a constant in the score target $\hat{\mathbf{s}}=\beta\mathbf{F}$.

\newpage
\section{Representative amorphous structures}
Figs.\,\ref{fig_fe5} and \ref{fig_fe8} show representative ATLAS-generated structures selected from the Pareto fronts reported in Fig.\,\ref{fig4}c,d of the main text. The selected compositions span the trade-off between shear modulus $G$ and Pugh ratio $B/G$, from stiffer alloys with larger $G$ to more ductile candidates with larger $B/G$. For each composition, two independently generated configurations are shown to illustrate the structural diversity of the sampled amorphous ensemble. The corresponding composition, bulk modulus $B$, shear modulus $G$ and Pugh ratio $B/G$ are reported above each pair.

\begin{figure}[!h]
  \centering
  \includegraphics[width=\textwidth]{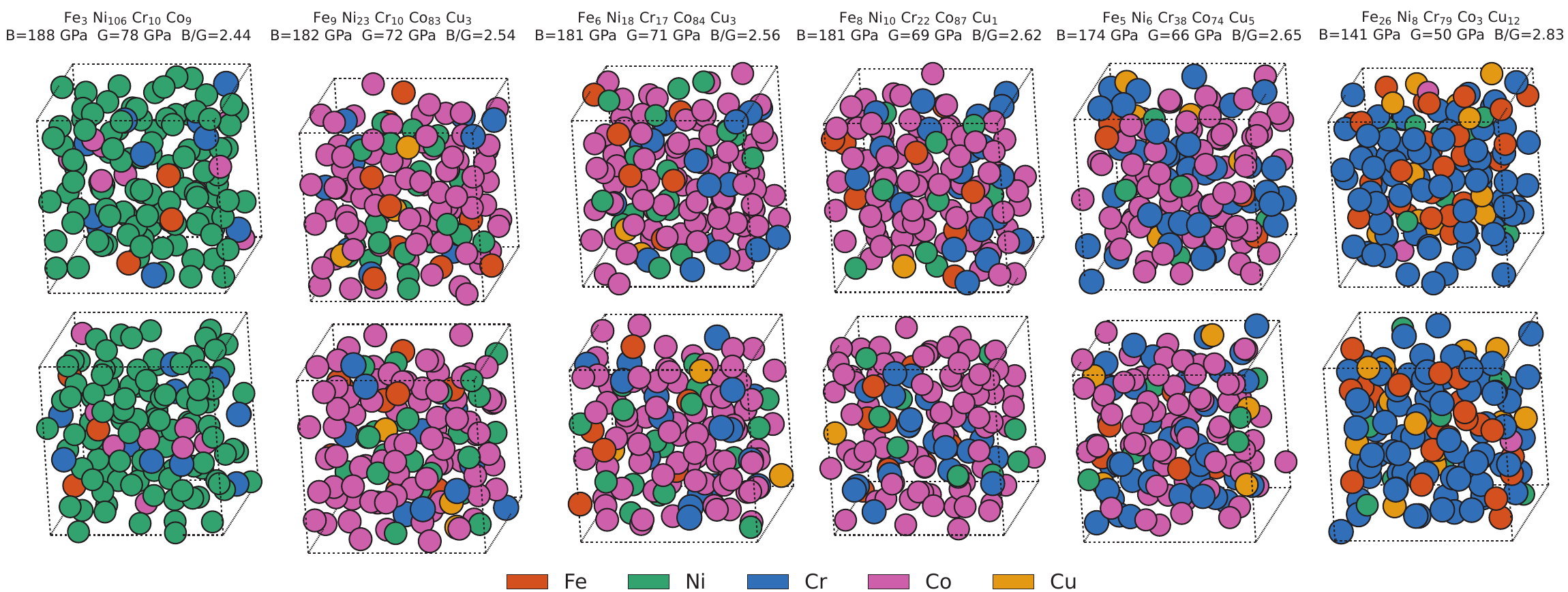}
  \caption{\textbf{Representative Pareto-optimal structures in the five-element metallic glass space.}
  Independently generated amorphous configurations selected along the Pareto front of the Fe-Ni-Cr-Co-Cu search in Fig.\,\ref{fig4}c. Each column shows two ATLAS samples at the same composition. The composition and corresponding ensemble-averaged bulk modulus $B$, shear modulus $G$ and Pugh ratio $B/G$ are indicated above each pair.}
  \label{fig_fe5}
\end{figure}

\begin{figure}[!h]
  \centering
  \includegraphics[width=\textwidth]{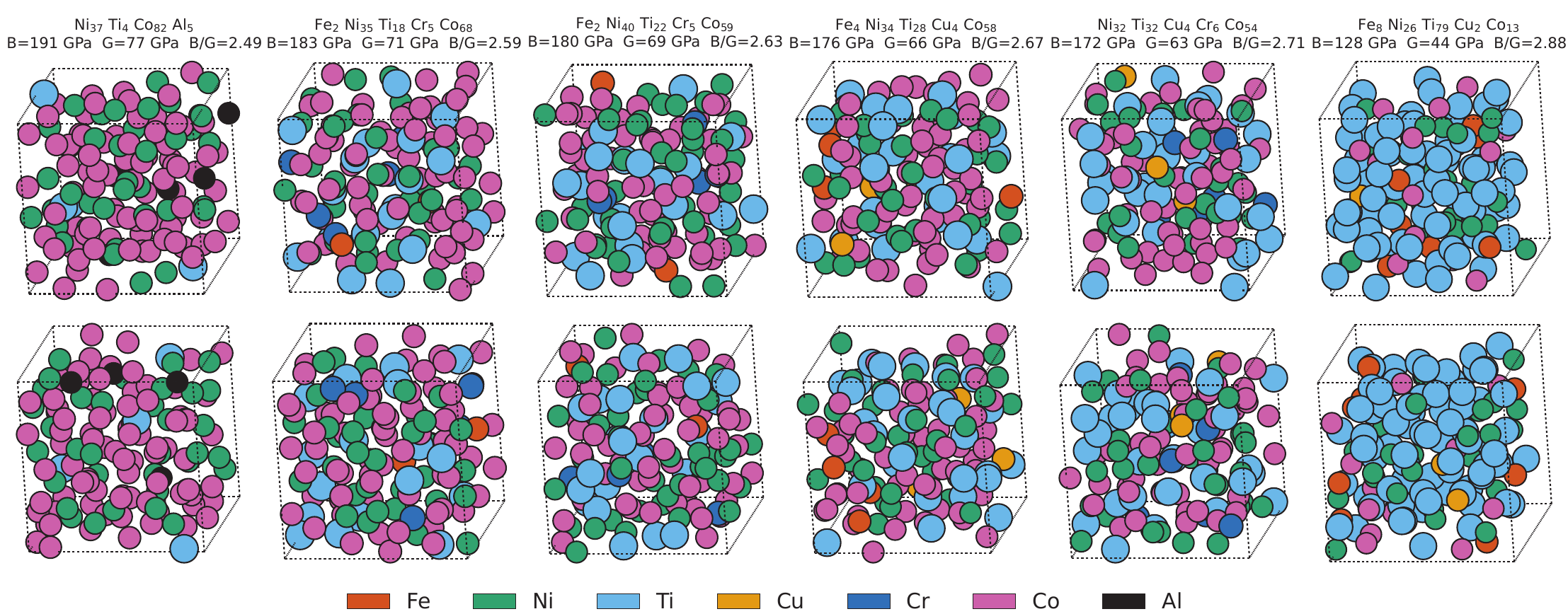}
  \caption{\textbf{Representative Pareto-optimal structures in the eight-element metallic glass space.}
  Independently generated amorphous configurations selected along the Pareto front obtained by choosing up to five elements from the Fe-Mn-Ni-Ti-Cu-Cr-Co-Al pool in Fig.\,\ref{fig4}d. Each column shows two ATLAS samples at the same composition, with the composition and corresponding ensemble-averaged $B$, $G$ and $B/G$ reported above.}
  \label{fig_fe8}
\end{figure}

\bibliography{refs.bib}
\end{document}